\documentclass[11pt]{article}

\usepackage{amsmath,amssymb,amsthm,fullpage,bm}
\usepackage{hyperref}
\usepackage{enumitem}
\usepackage{graphicx}
\usepackage{authblk}

\DeclareMathOperator{\SOL}{SOL}
\DeclareMathOperator{\OPT}{OPT}

\DeclareMathOperator{\dist}{dist}

\newtheorem{lemma}{Lemma}
\newtheorem{theorem}{Theorem}

\newtheorem{corollary}{Corollary}
\newtheorem{remark}{Remark}

\newtheorem{definition}{Definition}

\title{Average-Case and Smoothed Near-Optimality for Color-Code Decoding}
\author[1]{Daniel Gibney}
\author[1]{Jackson Huffstutler}

\affil[1]{Department of Computer Science, University of Texas at Dallas}
\affil[ ]{\texttt{daniel.gibney@utdallas.edu},\texttt{jackson.huffstutler@utdallas.edu}}
\date{}

\begin{document}

\maketitle

    \begin{abstract}
Minimum-weight decoding for two-dimensional color codes is NP-hard (Walters and Turner 2026), motivating
the search for approximation guarantees beyond worst-case exact decoding.  We
study a block-based decoder for triangular color-code lattices.  The decoder
partitions the lattice using periodically spaced boundary walls, solves each
block exactly by a frontier dynamic program, and then repairs the residual
syndrome on the boundary.  This gives a deterministic additive guarantee of the
form
\[
|E_{\mathrm{alg}}|
\leq
\OPT(S)+O(n/\tau),
\]
where \(n\) is the number of vertices and \(\tau\) is the wall spacing.

We show that this additive guarantee becomes a near-optimal multiplicative
guarantee under natural noise models.  For constant-rate iid face noise and
constant local degree, choosing \(\tau=\Theta(\epsilon^{-1})\) gives a
\((1+\epsilon)\)-approximation with probability \(1-\exp(-\Omega(n))\), in time
\(n2^{O(\epsilon^{-1})}\).  We also prove a smoothed analogue: the same
near-optimality guarantee holds when an arbitrary adversarial error pattern is
perturbed by independent constant-rate noise.  Finally, in a low-probability
regime with \(p=o(1/\log^2 n)\), the syndrome decomposes into small active
regions with high probability, allowing independent component-wise decoding and
yielding an exact minimum-weight correction in subexponential-in-\(n\)
overhead.  These results show that, despite worst-case hardness, color-code
decoding admits strong average-case, smoothed, and sparse-regime guarantees.
\end{abstract}

\section{Introduction}

Topological quantum error-correcting codes are among the leading candidates for
protecting quantum information from local noise.  The central algorithmic task
in such codes is decoding: given a measured syndrome, one must find a correction
whose syndrome agrees with the observed syndrome and whose application is
unlikely to create a logical error.  For surface codes under independent
\(X\)- and \(Z\)-type noise, this problem is closely connected to
minimum-weight perfect matching and related combinatorial optimization methods
\cite{Kitaev2003,Dennis2002,Edmonds1965}.  Color codes, introduced by Bombin
and Martin-Delgado, have additional algebraic structure and support a richer
set of transversal logical gates, making them an attractive alternative to
surface codes \cite{BombinMartinDelgado2006,KatzgraberBombinMartinDelgado2009}.
However, this additional structure also makes the design and analysis of fast
decoders more subtle.  A number of decoders for color codes have been developed,
including projection-based and restriction-based approaches that reduce aspects
of color-code decoding to surface-code or toric-code decoding
\cite{Delfosse2014,KubicaDelfosse2023}.

Recent complexity-theoretic work gives a strong motivation for studying
approximate color-code decoding.  Walters and Turner proved that minimum-weight
decoding in the two-dimensional color code is NP-hard, even in the code-capacity
setting \cite{WaltersTurner2026}.  Thus, unless \(\mathrm{P}=\mathrm{NP}\), one
cannot expect a polynomial-time algorithm that exactly solves every instance of
minimum-weight color-code decoding.  This separates color-code decoding from the
matching-solvable surface-code setting and suggests that the natural algorithmic
questions are no longer only about exact decoding, but also about approximation,
typical-case behavior, and smoothed guarantees.

There is already a simple worst-case approximation baseline.  Using
matching-based restricted-lattice decoding and a local lifting step, the
restriction-decoder framework gives a correction whose weight is at most three
times the minimum-weight correction for the corresponding color-code syndrome
\cite{KubicaDelfosse2023,GuWangKubica2026}.  This shows that constant-factor
approximation is possible in polynomial time.  The question addressed in this
paper is whether one can do much better under natural noise assumptions.  In
particular, we show that a block-based decoder is near-optimal with high
probability at constant noise rate, remains near-optimal in a smoothed model
with an arbitrary adversarial error followed by random noise, and becomes exact
in a sufficiently sparse low-probability regime.

We study an approximation-theoretic version of color-code decoding on a
triangular color-code lattice.  Let \(G\) denote the lattice, let \(F\) be the
set of triangular faces, and let \(V\) be the set of vertices.  An error is a
set of faces \(E\subseteq F\).  Its syndrome is the set
\[
S(E)
:=
\left\{
v\in V :
\left|
\{f\in E : v\in f\}
\right|
\equiv 1 \pmod 2
\right\}.
\]
Thus a vertex belongs to the syndrome exactly when it is incident to an odd
number of error faces.  Given a syndrome \(S\subseteq V\), we consider the
minimum-size syndrome realization problem
\[
\OPT(S)
:=
\min\{|E| : E\subseteq F,\ S(E)=S\}.
\]
The goal is to compute a correction \(E\) with \(S(E)=S\) whose size is close
to \(\OPT(S)\).  This captures the combinatorial core of minimum-weight
decoding.  For topological decoding, one must also account for homology, since
two corrections with the same syndrome may differ by a nontrivial logical
operator.  We handle this by adding a homology-aware postprocessing step that
preserves the syndrome while forcing the output into a prescribed homology
class.

Our algorithm is based on a block decomposition.  We insert periodically spaced
boundary walls into the lattice, solve each resulting block exactly, and then
repair the residual syndrome left on the boundary walls.  The local block
problem has width \(O(\tau)\), where \(\tau\) is the wall spacing, and can be
solved by a frontier dynamic program.  The boundary repair uses a parity
correction on boundary graphs, applied separately to the three vertex colors.
This yields a deterministic additive guarantee of the form
\[
|E_{\mathrm{alg}}|
\leq
\OPT(S)+O\left(\frac{n}{\tau}\right)
\]
where $n = |V|$.
The main work of the paper is to convert this additive guarantee into stronger
probabilistic statements.  Under constant-rate iid noise, the syndrome has
linear size with high probability, and therefore the additive term is small
relative to \(\OPT(S)\).  Under smoothed noise, the same conclusion holds even
after an arbitrary fixed error pattern is chosen adversarially.  In the
low-probability regime, the syndrome decomposes into small active regions, which
can be solved independently to obtain an exact global optimum.

\subsection{Our Results}

We now summarize the main guarantees proved in the paper.  All of the results use the same underlying block geometry: the lattice is divided by periodically spaced boundary walls, and local regions are solved exactly. For the constant-rate and smoothed guarantees, the remaining boundary syndrome is then repaired; in the sparse regime, the same geometry is used to isolate inflated active components that can be decoded independently.  The first result shows that, at constant noise
rate, this gives a near-optimal correction with high probability.

\begin{theorem}[Constant-rate near-optimality]
\label{thm:constant_rate_near_optimality_summary}
Assume iid face noise with constant probability \(p\in(0,1)\), and assume every
vertex is incident to exactly \(d=O(1)\) faces.  For every
\(\epsilon>0\), there is a choice of wall spacing
\[
\tau=\Theta(\epsilon^{-1})
\]
such that the block decoder outputs a correction \(E_{\mathrm{alg}}\) satisfying
\[
|E_{\mathrm{alg}}|
\leq
(1+\epsilon)\OPT(S)
\]
with probability \(1-\exp(-\Omega(n))\).  The running time is
\[
n\,2^{O(\epsilon^{-1})}.
\]
\end{theorem}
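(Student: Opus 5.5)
The plan is to combine the deterministic additive guarantee $|E_{\mathrm{alg}}|\le \OPT(S)+C n/\tau$ with a linear lower bound on $\OPT(S)$ that holds with probability $1-\exp(-\Omega(n))$. Once $\OPT(S)\ge c\,n$ for a constant $c=c(p,d)>0$, we choose $\tau=\lceil C/(c\epsilon)\rceil=\Theta(\epsilon^{-1})$. Then
\[
|E_{\mathrm{alg}}|\le \OPT(S)+\frac{Cn}{\tau}\le \OPT(S)+\epsilon c n\le (1+\epsilon)\OPT(S).
\]
The running time follows from the block analysis. There are $O(n/\tau^2)$ blocks, or at most $O(n)$. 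Each block is solved by a frontier dynamic program over frontiers of width $O(\tau)$. Since the degree $d=O(1)$, the number of frontier states is $2^{O(\tau)}$ and each block costs $\mathrm{poly}(\tau)2^{O(\tau)}$. The boundary parity repair is linear, so the total is $n\,2^{O(\epsilon^{-1})}$.

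The lower bound on $\OPT(S)$ comes in two steps. First, every face touches exactly $3$ vertices. Hence any $E$ with $S(E)=S$ satisfies $|S|\le 3|E|$, which gives $\OPT(S)\ge |S|/3$. It therefore suffices to show $|S|\ge c' n$ with exponentially high probability.

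Second, for a fixed vertex $v$, let $X_v$ be the indicator that $v\in S$. Its $d$ incident faces are independent Bernoulli$(p)$, so
\[
\Pr[v\in S]=\frac{1-(1-2p)^d}{2}=:q>0,
\]
using $p\in(0,1)$ and $d\ge 1$. When $p=1/2$ this equals $1/2$; when $p=1$ the event is deterministic, but $p<1$ is assumed. Thus $\mathbb{E}|S|=qn$.

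The main obstacle is concentration, because the $X_v$ are dependent: adjacent vertices share faces. I would handle this in one of two ways.

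The first option is McDiarmid's bounded-differences inequality applied to $|S|$ as a function of the independent face variables. Flipping one face changes membership of exactly its $3$ vertices, so each Lipschitz constant is $3$. With $|F|=O(n)$ faces (since $3|F|=dn$), this gives
\[
\Pr\bigl[|S|\le qn/2\bigr]\le \exp\!\left(-\frac{2(qn/2)^2}{9|F|}\right)=\exp(-\Omega(n)).
\]

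The second option extracts an independent subfamily. Pick a set of vertices at pairwise graph distance at least $3$, so that their incident face sets are disjoint. Bounded degree guarantees such a set of size $\Omega(n)$, and a Chernoff bound then applies.

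Either route gives $\OPT(S)\ge qn/6$ with probability $1-\exp(-\Omega(n))$. That fixes $c=q/6$, and the choice of $\tau$ above follows. A minor check is that $C$ in the additive guarantee depends only on the lattice geometry and the constant $d$, not on $p$. This ensures the constant hidden in $\tau=\Theta(\epsilon^{-1})$ depends only on $p$ and $d$. If the homology-aware postprocessing is included, I would verify that it adds at most $O(n/\tau)$ or $O(\sqrt n)$ faces, which can be absorbed into the same slack.
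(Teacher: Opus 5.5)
Your proposal is correct and follows the paper's proof. Both arguments combine three ingredients: the deterministic additive bound $\OPT(S)+O(n/\tau)$, the bound $\OPT(S)\ge |S|/3$, and an exponential lower-tail bound on $|S|$ around its mean $n\frac{1-(1-2p)^d}{2}$; both then take $\tau=\Theta(\epsilon^{-1})$ and use the $n2^{O(\tau)}$ frontier-DP running time. The only difference is the concentration step: the paper properly colors the dependency graph (vertices sharing a face) with $2d+1$ colors and applies Chernoff with a union bound over the color classes, which is essentially your second option, while your McDiarmid route with Lipschitz constant $3$ works equally well.
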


The next result shows that the same guarantee is robust to adversarial
perturbations.  Instead of assuming that the entire error is random, we allow an
arbitrary fixed error pattern and then add independent random noise.

\begin{theorem}[Smoothed near-optimality]
\label{thm:smoothed_near_optimality_summary}
Assume every vertex is incident to exactly \(d=O(1)\) faces.
Let \(E_0\subseteq F\) be an arbitrary fixed error pattern, and let
\(N_p\subseteq F\) be iid face noise with constant probability \(p\in(0,1)\).
Let
\[
S:=S(E_0\oplus N_p).
\]
For every \(\epsilon>0\), there is a choice of wall spacing
\[
\tau=\Theta(\epsilon^{-1})
\]
such that the block decoder outputs a correction \(E_{\mathrm{alg}}\) satisfying
\[
|E_{\mathrm{alg}}|
\leq
(1+\epsilon)\OPT(S)
\]
with probability \(1-\exp(-\Omega(n))\).
\end{theorem}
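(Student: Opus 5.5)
We will reduce Theorem~\ref{thm:smoothed_near_optimality_summary} to the deterministic additive guarantee
\[
|E_{\mathrm{alg}}|\le \OPT(S)+C\,n/\tau ,
\]
which holds for every syndrome and does not depend on how \(S\) was generated. The only probabilistic ingredient is a linear lower bound \(\OPT(S)\ge c\,n\) holding with probability \(1-\exp(-\Omega(n))\), where \(c>0\) depends only on \(p\) and \(d\) and not on \(E_0\). Given this bound, setting \(\tau=\lceil C/(c\epsilon)\rceil=\Theta(\epsilon^{-1})\) gives \(C n/\tau\le \epsilon c n\le \epsilon\OPT(S)\), which is the claim.

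For the lower bound, each face has at most three vertices, so any \(E\) with \(S(E)=S\) satisfies \(|S|\le 3|E|\). Hence \(\OPT(S)\ge |S|/3\), and it suffices to show \(|S(E_0\oplus N_p)|\ge c' n\) with the required probability.

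To do this, pick a set \(W\subseteq V\) of vertices pairwise at graph distance at least \(3\). Faces touching distinct vertices of \(W\) are then disjoint. Because the lattice has bounded local structure (each vertex lies in \(d=O(1)\) faces and the lattice is planar and triangular), a greedy choice gives \(|W|\ge n/K\) with \(K=O(1)\). For \(w\in W\), the event \(w\in S\) is that \(|\{f\ni w: f\in E_0\oplus N_p\}|\) is odd, and this depends only on the \(d\) noise bits of faces containing \(w\). Conditioning on all the noise bits except one fixed face \(f_w\ni w\), flipping \(N_p(f_w)\) flips the parity. So
\[
\Pr[w\in S]\ge \min(p,1-p)=:q>0
\]
whatever \(E_0\) and the other bits are. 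Indeed this parity is exactly Bernoulli with bias \(\tfrac12\bigl(1\pm(1-2p)^d\bigr)\), which is bounded away from \(0\). Since the face sets for distinct \(w\in W\) are disjoint, these events are independent. A Chernoff bound then gives \(|S\cap W|\ge q|W|/2\) with probability \(1-\exp(-\Omega(n))\), so \(\OPT(S)\ge q n/(6K)\). The adversarial \(E_0\) only shifts each local parity and cannot concentrate mass in a way that lowers the bias below \(q\), so the argument is uniform in \(E_0\).

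The main obstacle is producing the independent set \(W\) of linear size, that is, showing that face-neighborhoods of vertices overlap only boundedly. This needs the constant-degree assumption together with the fact that each face has three vertices. Each vertex \(v\) lies in \(d\) faces, hence has at most \(2d\) neighbors, so the ball of radius \(2\) has at most \(1+2d+4d^2\) vertices. Greedy selection therefore yields \(K\le 1+2d+4d^2\). The rest is routine: the additive bound is the deterministic guarantee stated earlier, and the running time \(n2^{O(\epsilon^{-1})}\) is inherited unchanged from the same block decoder as in Theorem~\ref{thm:constant_rate_near_optimality_summary}.
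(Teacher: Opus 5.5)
Your proof is correct and follows the same route as the paper. You combine the deterministic additive bound with $\OPT(S)\ge |S|/3$, then get a lower tail on $|S|$ that is uniform in $E_0$ from bounded dependency plus Chernoff. The paper colors the whole face-sharing dependency graph with $2d+1$ colors and union-bounds over the classes using per-vertex bias $\beta=\min\{\alpha,1-\alpha\}$. You instead use a single greedy independent set $W$ with the cruder bias $\min(p,1-p)\le\beta$, which only worsens constants; pairwise distance $2$ would already give independence.
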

In the sparse-noise regime, the behavior is different.  Rather than using the
boundary repair to obtain an approximation guarantee, we show that the syndrome
breaks into small active regions.  This result uses only local two-dimensional
geometry at logarithmic scales: roughly, logarithmic-radius block neighborhoods
must have quadratic growth, local regions must have the expected
two-dimensional treewidth, and any connected correction that crosses a
logarithmic-width buffer must have size proportional to the buffer width.  These
conditions hold for the usual planar color-code patches and also for periodic
two-dimensional lattices, such as tori, as long as the injectivity radius is at
least logarithmic in the system size.


\begin{theorem}[Exact decoding when \(p=o(1/\log^2 n)\)]
\label{thm:low_probability_log_squared_summary}
For lattice families with logarithmically two-dimensional block geometry, the
inflated active-component decoder can be chosen so that, whenever
\[
p=o\left(\frac{1}{\log^2 n}\right),
\]
it produces a globally minimum-weight correction with high probability.  The
total running time is
\[
n\,2^{O((\log n)^{3/2})}.
\]
\end{theorem}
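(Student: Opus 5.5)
The argument has two parts. First, a deterministic \emph{localization} statement: any minimum-weight correction decomposes along inflated active components. Second, a probabilistic statement: when \(p=o(1/\log^2 n)\), every inflated component is small with high probability.

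\textbf{Decoder.} Fix a buffer width \(w=C\log n\) and a block scale \(L=\Theta(\log n)\).
\begin{enumerate}[label=(\roman*)]
\item Tile the lattice into blocks of side \(L\) using the same wall geometry as before.
\item Call a block \emph{active} if it meets the syndrome \(S\).
\item Inflate each active block by \(w\).
\item Take connected components of the union of inflated active blocks. These are the inflated active components \(A_1,\dots,A_k\).
\item On each \(A_i\), solve exactly the restricted problem \(\min\{|E| : E\subseteq F(A_i),\ S(E)=S\cap A_i\}\) with the frontier dynamic program.
\item Output the union of the local solutions.
\end{enumerate}

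\textbf{Step 1 (localization).} I would show that some global optimum \(E^*\) is supported inside \(\bigcup_i A_i\) and splits as \(E^*=\bigsplus_i E^*_i\) with \(S(E^*_i)=S\cap A_i\).

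Decompose \(E^*\) into face-connected clusters, meaning connected in the face-adjacency graph via shared vertices. Each cluster has its own syndrome, and these syndromes partition \(S\).

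Suppose a cluster \(K\) touches two different components \(A_i,A_j\), or leaves \(\bigcup A_i\). Then \(K\) must cross a buffer of width \(w\). By the buffer-crossing assumption, \(|K|\ge c w\).

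On the other hand, \(K\) could be replaced by local corrections, one per active block it touches. Consider the restriction of the syndrome of \(K\) to each block. I need each such piece to be realizable within the inflated block using \(O(L^2)\) faces. This uses local solvability: in planar patches with boundaries, or on tori with the charge-neutrality constraint, any parity pattern inside a ball is fillable locally.

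The main obstacle is this replacement inequality. A naive count gives replacement cost \(O(L^2)\) per block touched, while the crossing lower bound is only \(cw\) per crossing. I would handle it by a charging argument:
\begin{itemize}
\item Each crossing segment of \(K\) through the annulus between active regions already costs \(\ge cw\) faces.
\item A replacement that closes each piece of \(K\) within its own component uses only the faces of \(K\) inside \(A_i\), plus a boundary fix. The boundary fix should be bounded by the number of crossing points times \(O(1)\), using a shortest path within a dual color restriction.
\end{itemize}
Choosing \(C\) large makes the replacement strictly cheaper. This contradicts optimality, or, with tie-breaking, lets us pick an optimum that is localized.

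Given localization, each \(E^*_i\) is feasible for the \(i\)-th restricted problem. Each restricted optimum is then no larger than \(E^*_i\), and the union of restricted optima has syndrome \(S\). So the output is globally optimal.

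\textbf{Step 2 (component sizes).} An active block requires an error face within distance \(O(L)\). So a block is active with probability at most \(O(L^2 p)=o(1)\), by quadratic growth. Activity events of blocks at distance more than \(2L\) are independent.

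An inflated component containing \(m\) active blocks is a connected animal on the coarse block graph, where blocks are adjacent if within distance \(2w/L+2=O(1)\) in block units. That coarse graph has bounded degree \(\Delta\). The number of animals of size \(m\) through a fixed block is at most \(\Delta^{O(m)}\). Each such animal contains \(\Omega(m)\) well-separated blocks, all of which must be active, giving probability \((O(L^2p))^{\Omega(m)}\).

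Since \(L^2p=o(1)\), a union bound over the \(n\) starting blocks shows that with high probability every component has \(m=O(1)\) blocks. Hence each component has \(O(\log^2 n)\) vertices and diameter \(O(\log n)\).

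\textbf{Step 3 (running time).} By the local treewidth assumption, a region with \(N=O(\log^2 n)\) vertices has treewidth \(O(\sqrt N)=O(\log n)\). The width \(\sqrt{N}\) alone would give a \(2^{O(\log n)}\) dynamic program.

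For the stated bound I would not rely on \(m=O(1)\). Instead, allow components of up to \(m\le \sqrt{\log n}\) blocks, which the Step 2 bound also controls. Their frontier width is \(O(\sqrt{m}\,L)=O((\log n)^{5/4})\), comfortably inside the stated bound. Alternatively, a cruder width bound of \(O(mL)=O((\log n)^{3/2})\) matches the theorem exactly.

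Summing over at most \(n\) components gives total time \(n\,2^{O((\log n)^{3/2})}\), which is the claimed running time.
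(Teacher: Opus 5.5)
Your high-level plan (isolate active regions, solve each exactly, bound components by animal counting) is the paper's. However, Step 1 has a genuine gap. Localization is not a deterministic fact, and the local replacement you need does not exist in general. By Lemma~\ref{lem:color_parity_invariant}, a face set can only realize vertex sets $X$ with $|X_R|\equiv|X_G|\equiv|X_B| \pmod 2$. So ``any parity pattern inside a ball is fillable locally'' is false, and the restriction of a crossing cluster's syndrome to one block or component can violate this condition.

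Concretely, chain two-face moves $M_a$ of one color $c$ (as in $H_c$) into a connected string; its syndrome is just two $c$-colored vertices. If the endpoints lie in active blocks more than $2w$ apart, they land in different components and each $S\cap A_i$ is a single vertex. Then your local problems are infeasible and every optimum must merge the components. Step 2 does not exclude this, since such a string produces only two isolated active blocks. For the same reason the ``boundary fix of $O(1)$ per crossing point'' fails. The truncated syndrome of $K\cap F(A_i)$ may be unfixable inside $A_i$. Even when it is fixable, pairing two crossing points costs about their separation along $\partial A_i$, which need not be $O(1)$.

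The missing idea is to use the random generating error $E$ itself as the local witness; this is how the paper obtains conditions (3)--(4) of Definition~\ref{def:sparse-good-event}.
\begin{itemize}
\item With high probability no connected error cluster has $\Omega(\log n)$ faces, since the probability is at most $n(O(p))^{\Omega(\log n)}=o(1)$. Hence each nonzero-syndrome cluster of $E$ lies inside one $A_i$, and $E\cap F(A_i)$ realizes $S\cap A_i$ exactly.
\item By coarse animal counting with $\eta=O(p\log^2 n)=o(1)$, each such witness has at most $R=\varepsilon\log n$ faces with high probability.
\item Take $w=C\log n$ with $2R<cw$. A cluster of $E^\star$ merging $\ell\ge 2$ components costs at least $cw(\ell-1)$, by a multi-region crossing bound. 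An escaping cluster costs at least $cw$. Both exceed the at most $\ell R$ cost of swapping in the witnesses, so optimality forces localization.
\end{itemize}

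Your component-size bookkeeping is also off. The bound $n(O(L^2p))^{\Omega(m)}=o(1)$ only forces $m\gtrsim \log n/\log(1/(L^2p))$. This is nearly $\log n$ when $L^2p\to 0$ slowly, e.g.\ $p=1/(\log^2 n\,\log\log n)$. So neither $m=O(1)$ nor $m\le\sqrt{\log n}$ holds in general. You must allow $m=\Theta(\log n)$, as the paper does with $t=\Theta(\log n)$. The $(\log n)^{3/2}$ bound then comes only from the treewidth bound $O(\sqrt{m}\,\log n)$ with a bounded-treewidth DP. The cruder $O(mL)$ width gives $(\log n)^2$, and a column sweep over an irregular union of blocks does not achieve $O(\sqrt{m}\,L)$.
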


\section{Constant Error Rate Solution}
\label{sec:cont_error}

\subsection{Algorithm}

Let \(G\) denote the triangular color-code lattice. Let
\(
F
\)
be the set of triangular faces and let
\(
V
\)
be the set of vertices.
For a set of faces
\(
E\subseteq F,
\)
define its syndrome by
\[
S(E)
:=
\left\{
v\in V :
\left|
\{f\in E : v\in f\}
\right|
\equiv 1 \pmod 2
\right\}.
\]
That is, a vertex belongs to \(S(E)\) iff it is incident to an odd number
of faces in \(E\).
Let $S \subseteq V$ be the starting syndrome.

\paragraph{Block and boundary decomposition.}
We choose a set of boundary faces
\[
F_{\partial}\subseteq F
\]
formed by periodically spaced vertical columns and anti-diagonal columns of
triangles. These boundary faces form connected ``walls'' in the lattice. The
spacing between consecutive walls is denoted by \(\tau\).
A vertex is called a boundary vertex if it is incident to at least one boundary
face. Let
\[
V_{\partial}
:=
\{v\in V : \exists f\in F_{\partial}\text{ with }v\in f\}.
\]
The remaining vertices are called interior vertices:
\(
V_{\mathrm{int}}:=V\setminus V_{\partial}.
\)
Similarly, a face is called an interior face if it is not a boundary face:
\(
F_{\mathrm{int}}:=F\setminus F_{\partial}.
\)

We define a block to be a maximal connected component of the sublattice induced
by the interior faces \(F_{\mathrm{int}}\), where two faces are adjacent if
they share a vertex. Let
\(
\mathcal B=\{B_1,\dots,B_q\}
\)
denote the resulting collection of blocks.
For a block \(B_i\), let
\(
F(B_i)\subseteq F_{\mathrm{int}}
\)
be its set of faces, and let
\[
V(B_i):=\{v\in V : v\in f\text{ for some }f\in F(B_i)\}
\]
be the vertices incident to faces of \(B_i\). The interior vertices of the block
are denoted as 
\[
I(B_i):=V(B_i)\cap V_{\mathrm{int}}.
\]
and the boundary vertices adjacent to \(B_i\) are denoted
\(
\partial_V B_i:=V(B_i)\cap V_{\partial}.
\)

\paragraph{Local solutions.}
Define the local target syndrome in block
\(B_i\) by
\[
S_i:=S\cap I(B_i).
\]
Our local decoder for \(B_i\) computes a minimum-size set of faces
\(
E_i\subseteq F(B_i)
\)
such that
\[
S(E_i)\cap I(B_i)=S_i.
\]
No constraint is imposed on the syndrome produced on
\(
\partial_V B_i.
\)

Define the aggregate local solution
\[
E_{\mathrm{loc}}
:=
\bigoplus_{i=1}^q E_i.
\]
The residual syndrome is
\[
R
:=
S\oplus S(E_{\mathrm{loc}}).
\]
By construction,
\(
R\subseteq V_{\partial}.
\)

\paragraph{Dynamic programming inside one block.}
Suppose a block \(B_i\) is a \(\tau\times\tau\) sublattice.  We now describe a dynamic programming algorithm for the local
problem in \(B_i\) running in time
\[
  \tau^2 2^{O(\tau)}.
\]

Let
\[
  F_i:=F(B_i),
  \qquad
  I_i:=I(B_i),
\]
and recall that the local problem is to find a minimum-size set
\(E_i\subseteq F_i\) such that
\[
  S(E_i)\cap I_i=S_i.
\]
There is no constraint on the syndrome produced on the boundary vertices
\(\partial_VB_i\).
We order the faces of \(F_i\) column by column, from left to right, and within
each column from bottom to top.  Let
\[
  f_1,f_2,\dots,f_N
\]
be this ordering, where \(N=|F_i|=O(\tau^2)\).  For \(k=0,1,\dots,N\), define
\[
  P_k:=\{f_1,\dots,f_k\},
  \qquad
  U_k:=F_i\setminus P_k.
\]
Thus \(P_k\) is the set of faces already processed, and \(U_k\) is the set of
faces not yet processed.

The dynamic program maintains the parity values only on the current
\emph{frontier}.  Define
\[
  W_k
  :=
  \{v\in I_i :
    v\text{ is incident to at least one face in }P_k
    \text{ and at least one face in }U_k
  \}.
\]
These are exactly the interior vertices whose final parity has not yet been
determined, because future unprocessed faces may still toggle them.  By the
choice of the column ordering, the frontier intersects only a constant number
of columns of the \(\tau\times\tau\) block.  Hence
\(
  |W_k|=O(\tau)
\)
for every \(k\).

A state at time \(k\) is a parity vector
\(
  \sigma\in\{0,1\}^{W_k}.
\)
The value
\(
  D_k(\sigma)
\)
is defined to be the minimum number of selected faces among \(P_k\) such that:

\begin{enumerate}
\item the parity induced on each frontier vertex \(v\in W_k\) is
      \(\sigma(v)\), and

\item every interior vertex \(v\in I_i\) that is incident only to processed
      faces, meaning
      \[
        \{f\in F_i:v\in f\}\subseteq P_k,
      \]
      already has the required final parity
      \(
        \mathbf 1[v\in S_i].
      \)
\end{enumerate}

If no partial solution satisfies these conditions, we set
\(
  D_k(\sigma)=+\infty.
\)
The initialization is
\(
  D_0(\emptyset)=0,
\)
where the unique state is the empty parity vector.  All other states have value
\(+\infty\).

Now suppose we have computed \(D_{k-1}\) and want to process the next face
\(f_k\).  For each state \(\sigma\in\{0,1\}^{W_{k-1}}\), we consider the two
possibilities
\(
  x\in\{0,1\},
\)
where \(x=1\) means that \(f_k\) is included in the local correction and
\(x=0\) means that it is not included.

Including \(f_k\) toggles the parity of each interior vertex incident to
\(f_k\).  Boundary vertices are ignored because the local problem imposes no
constraints on \(\partial_VB_i\).  Thus, from \(\sigma\) and \(x\), we obtain a
temporary parity assignment on the vertices in
\[
  W_{k-1}\cup (f_k\cap I_i).
\]
We then restrict this assignment to the new frontier \(W_k\), obtaining a
candidate state \(\sigma'\in\{0,1\}^{W_k}\).

During this transition, some interior vertices leave the frontier.  These are
the vertices \(v\in I_i\) such that all faces incident to \(v\) have now been
processed:
\[
  \{f\in F_i:v\in f\}\subseteq P_k.
\]
For each such vertex, the temporary parity must equal the required target value
\[
  \mathbf 1[v\in S_i].
\]
If this condition fails, the transition is discarded.  Otherwise, we update
\[
  D_k(\sigma')
  =
  \min\{D_k(\sigma'),\; D_{k-1}(\sigma)+x\}.
\]

After all faces have been processed, the frontier is empty:
\(
  W_N=\emptyset.
\)
The value of the local optimum is therefore
\[
  \min\{|E_i|:E_i\subseteq F_i,\ S(E_i)\cap I_i=S_i\}
  =
  D_N(\emptyset).
\]
By storing predecessor pointers, we can also recover an optimal set of faces
\(E_i\subseteq F_i\).

It remains to justify the running time.  At every step,
\[
  |W_k|=O(\tau),
\]
so the number of states is
\[
  2^{|W_k|}=2^{O(\tau)}.
\]
There are \(N=O(\tau^2)\) faces, and for each face we try two choices,
corresponding to including or not including the face.  Therefore the total
running time is
\[
  O(\tau^2)\cdot 2^{O(\tau)}
  =
  \tau^2 2^{O(\tau)}.
\]
The space usage is \(2^{O(\tau)}\) if only the optimum value is needed, and
\(\tau^2 2^{O(\tau)}\) if all predecessor pointers are stored for
reconstruction.  Alternatively, reconstruction can be done with standard
divide-and-conquer traceback using \(2^{O(\tau)}\) working space.

\paragraph{Resolving the residual syndrome.}
For each color \(c\in\{R,G,B\}\), let \(V_c\subseteq V\) be the set of
vertices of color \(c\). For any syndrome \(X\subseteq V\), define
\[
X_c:=X\cap V_c.
\]

We first observe that the residual syndrome satisfies a color-parity invariant.

\begin{lemma}[Color parity invariant]
\label{lem:color_parity_invariant}
For every set of faces \(E\subseteq F\),
\[
|S(E)\cap V_R|
\equiv
|S(E)\cap V_G|
\equiv
|S(E)\cap V_B|
\pmod 2.
\]
Consequently,
\[
|R_R|\equiv |R_G|\equiv |R_B|\pmod 2.
\]
\end{lemma}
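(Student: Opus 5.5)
The plan is to use linearity over \(\mathbb F_2\): the syndrome map \(E\mapsto S(E)\) is additive, in the sense that \(S(E\oplus E')=S(E)\oplus S(E')\). This holds because the parity at each vertex of a symmetric difference is the sum of the two parities. Since \(|X\oplus Y|\equiv |X|+|Y|\pmod 2\), the map \(X\mapsto |X\cap V_c| \bmod 2\) is also additive for each color \(c\). It therefore suffices to check the invariant for single faces and then extend it to arbitrary \(E=\bigoplus_{f\in E}\{f\}\) by induction on \(|E|\), starting from \(E=\emptyset\), where all three counts are \(0\).

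For a single triangular face \(f\), the syndrome \(S(\{f\})\) is exactly the vertex set of \(f\), since each of its vertices is incident to one error face and no other vertex is. In the triangular color-code lattice the vertices are properly \(3\)-colored, so the three vertices of any triangle receive three distinct colors. Hence \(|S(\{f\})\cap V_c|=1\) for each \(c\in\{R,G,B\}\), and the three parities agree. This is the only place the geometry enters. The one point to state explicitly is that the lattice's vertex coloring is a proper \(3\)-coloring of the triangles, i.e.\ each face has one vertex of each color, which is the defining property of the color-code coloring.

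Summing over faces then gives \(|S(E)\cap V_c|\equiv |E|\pmod 2\) for every color \(c\), which proves the first claim.

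For the consequence, note that \(S\) is a genuine syndrome, \(S=S(E_{\mathrm{true}})\) for the actual error. By additivity, \(R=S\oplus S(E_{\mathrm{loc}})=S(E_{\mathrm{true}}\oplus E_{\mathrm{loc}})\), so the first part applies to \(R\) directly.

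I do not expect a real obstacle here. The only subtle point is making sure \(S\) is taken to be a realizable syndrome, which is implicit in the setting; for an arbitrary set \(S\subseteq V\) the claim about \(R\) could fail.
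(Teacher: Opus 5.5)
Your proposal is correct and takes essentially the same route as the paper: each triangle has exactly one vertex of each color, so adding a face flips all three color parities together, and the claim for \(R\) follows because \(S\) and \(S(E_{\mathrm{loc}})\) are both syndromes of face sets. Your explicit identity \(|S(E)\cap V_c|\equiv |E| \pmod 2\) and your remark that \(S\) must be realizable only make explicit what the paper uses implicitly; the paper likewise restricts to valid syndromes in Lemma~\ref{lem:deterministic_additive_guarantee}.
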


\begin{proof}
Each triangular face is incident to exactly one red, one green, and one blue
vertex. Thus adding or removing a face toggles exactly one syndrome vertex of
each color. Therefore the parities of the three color classes always change
together.
Since \(R=S\oplus S(E_{\mathrm{loc}})\), and both \(S\) and
\(S(E_{\mathrm{loc}})\) are syndromes of face sets, the same parity condition
holds for \(R\).
\end{proof}

For each color \(c\in\{R,G,B\}\), define the boundary graph
\[
H_c=(V_{\partial}\cap V_c,A(H_c))
\]
as follows.  Two vertices
\[
u,v\in V_{\partial}\cap V_c
\]
are adjacent if there is a two-face boundary move
\[
M_a=\{f_a^1,f_a^2\}\subseteq F_{\partial}
\]
consisting of two adjacent boundary faces such that
\[
S(M_a)=\{u,v\}.
\]
We fix one such move \(M_a\) for every edge \(a=\{u,v\}\in A(H_c)\).

We assume that the boundary walls are chosen so that
\begin{itemize}
\item each \(H_c\) is connected;
\item there exists a boundary face \(f^\star\in F_\partial\) whose three
vertices belong to \(H_R,H_G,H_B\), respectively;
\item
\[
|A(H_R)|+|A(H_G)|+|A(H_B)|=O(n/\tau).
\]
\end{itemize}

We will use the following standard parity-subgraph argument, which is the
tree-specialized form of the \(T\)-join construction used in matching and
Chinese postman formulations. We include the proof for completeness.

\begin{lemma}[Tree parity correction]
\label{lem:tree_parity_correction}
Let \(T=(U,A)\) be a tree, and let \(X\subseteq U\) be a set of vertices with
\(|X|\) even. Then there exists a set of edges \(A_X\subseteq A\) such that the
vertices of odd degree in the subgraph \((U,A_X)\) are exactly the vertices of
\(X\).
\end{lemma}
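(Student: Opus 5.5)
We prove the tree parity correction lemma by induction on the number of vertices of \(T\), peeling off leaves; alternatively we can give the explicit edge set directly. For the explicit construction, let \(A_X\) be the set of edges \(a\in A\) such that removing \(a\) from \(T\) splits \(U\) into two components each containing an odd number of vertices of \(X\). (Since \(|X|\) is even, the two sides have equal parity, so this condition is symmetric.) We then verify that the degree of each vertex \(u\) in \((U,A_X)\) has the parity of \(\mathbf 1[u\in X]\).

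To verify the parity at a fixed vertex \(u\), we look at the edges \(a_1,\dots,a_m\) incident to \(u\) in \(T\). Removing \(u\) splits \(T\) into subtrees \(T_1,\dots,T_m\), where \(T_j\) is the component on the far side of \(a_j\). By the definition above, \(a_j\in A_X\) iff \(|X\cap V(T_j)|\) is odd. Hence
\[
\deg_{A_X}(u)\equiv\sum_{j=1}^m |X\cap V(T_j)| = |X\setminus\{u\}| \pmod 2,
\]
where the equality holds because the \(T_j\) partition \(U\setminus\{u\}\). Since \(|X|\) is even, \(|X\setminus\{u\}|\) is odd exactly when \(u\in X\). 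This is precisely the claim.

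As a cross-check, and as an alternative algorithmic proof, the inductive route also works. Pick a leaf \(\ell\) with unique incident edge \(a=\{\ell,w\}\). If \(\ell\in X\), put \(a\) into \(A_X\) and recurse on \(T-\ell\) with target set \(X\oplus\{\ell,w\}\); otherwise recurse on \(T-\ell\) with target \(X\). Either way the new target has even size, and the parities at \(\ell\) and \(w\) come out correct. The base case is a single vertex, where \(X\) must be empty and \(A_X=\emptyset\) works.

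There is no real obstacle here. The only point needing care is the bookkeeping showing that the subtrees at \(u\) partition \(U\setminus\{u\}\); this is where the evenness of \(|X|\) is used. It is also worth noting for later use that \(|A_X|\le |A|\), so the correction cost in the residual-repair step is bounded by the number of edges in a spanning tree of each \(H_c\).
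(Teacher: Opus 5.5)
Your proof is correct and is essentially the paper's argument. Your cut-parity edge set coincides with the paper's set $\{e_u : |X\cap V(T_u)|\equiv 1 \pmod 2\}$ for any choice of root, because removing $e_u$ separates $V(T_u)$ from its complement and $|X|$ is even. The only differences are presentational: your root-free verification treats every vertex uniformly via the partition of $U\setminus\{u\}$ by the components of $T-u$, which avoids the paper's separate root case, and your leaf-peeling induction is a recursive form of the paper's postorder procedure.
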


\begin{proof}
Root \(T\) at an arbitrary vertex \(r\). For every non-root vertex \(u\), let
\(T_u\) denote the subtree rooted at \(u\), and let \(e_u\) be the edge from
\(u\) to its parent. Define
\[
A_X
:=
\{e_u : |X\cap V(T_u)| \equiv 1 \pmod 2\}.
\]

We show that the odd-degree vertices of \((U,A_X)\) are exactly \(X\).
Let \(u\neq r\), and let \(u_1,\dots,u_k\) be the children of \(u\). The degree
of \(u\) in \((U,A_X)\), modulo \(2\), is
\[
\deg_{A_X}(u)
\equiv
\mathbf{1}_{e_u\in A_X}
+
\sum_{j=1}^k \mathbf{1}_{e_{u_j}\in A_X}
\pmod 2.
\]
By definition of \(A_X\),
\(
\mathbf{1}_{e_u\in A_X}
\equiv
|X\cap V(T_u)|
\pmod 2
\)
and
\(
\mathbf{1}_{e_{u_j}\in A_X}
\equiv
|X\cap V(T_{u_j})|
\pmod 2.
\)
Therefore
\[
\deg_{A_X}(u)
\equiv
|X\cap V(T_u)|
+
\sum_{j=1}^k |X\cap V(T_{u_j})|
\pmod 2.
\]
Since the subtree rooted at \(u\) decomposes into the disjoint union
\[
V(T_u)
=
\{u\}
\sqcup
V(T_{u_1})
\sqcup
\cdots
\sqcup
V(T_{u_k}),
\]
we have
\[
|X\cap V(T_u)|
\equiv
\mathbf{1}_{u\in X}
+
\sum_{j=1}^k |X\cap V(T_{u_j})|
\pmod 2.
\]
Substituting this identity gives
\[
\deg_{A_X}(u)
\equiv
\mathbf{1}_{u\in X}
+
2\sum_{j=1}^k |X\cap V(T_{u_j})|
\equiv
\mathbf{1}_{u\in X}
\pmod 2.
\]

For the root \(r\), there is no parent edge. Hence
\[
\deg_{A_X}(r)
\equiv
\sum_{j=1}^k \mathbf{1}_{e_{r_j}\in A_X}
\pmod 2,
\]
where \(r_1,\dots,r_k\) are the children of \(r\). Since \(|X|\) is even,
\[
0
\equiv
|X|
\equiv
\mathbf{1}_{r\in X}
+
\sum_{j=1}^k |X\cap V(T_{r_j})|
\pmod 2.
\]
Therefore
\[
\deg_{A_X}(r)
\equiv
\mathbf{1}_{r\in X}
\pmod 2.
\]
Thus the odd-degree vertices of \((U,A_X)\) are exactly \(X\).

\end{proof}

Applying Lemma \ref{lem:tree_parity_correction}, we root each spanning tree \(T_c\) at an arbitrary vertex \(r\).  Process the
vertices in postorder.  For each vertex \(u\), maintain a bit
\[
p(u) := \mathbf 1[u\in R_c]
\]
which records whether the subtree rooted at \(u\) currently has odd residual
parity.  When processing a non-root vertex \(u\), if \(p(u)=1\), add the edge
from \(u\) to its parent to \(A_c'\) and toggle \(p(\operatorname{parent}(u))\).
If \(p(u)=0\), add no edge.  After all non-root vertices have been processed,
the root parity is zero because \(|R_c|\) is even.  The selected edge set
\(A_c'\) has odd-degree vertices exactly \(R_c\).

\begin{lemma}[Boundary repair]
\label{lem:boundary_repair}
Every residual syndrome \(R\subseteq V_{\partial}\) can be corrected by a set
of boundary faces \(E_{\partial}\subseteq F_{\partial}\) such that
\[
S(E_{\partial})=R
\]
and
\[
|E_{\partial}|=O(n/\tau).
\]
\end{lemma}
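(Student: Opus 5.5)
The plan is to correct each color class separately on its boundary graph, then fix up the parities using the special face \(f^\star\), and finally lift edges of the boundary graphs to two-face boundary moves.

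\emph{Parity normalization.} By Lemma~\ref{lem:color_parity_invariant}, the three sizes \(|R_R|,|R_G|,|R_B|\) share a common parity.
\begin{itemize}
\item If that parity is even, set \(E^\star:=\emptyset\).
\item If it is odd, set \(E^\star:=\{f^\star\}\).
\end{itemize}
Define \(R':=R\oplus S(E^\star)\). Since the three vertices of \(f^\star\) lie in \(V_\partial\), one in each color class, toggling \(f^\star\) flips the parity of each \(|R_c|\). Hence every \(R'_c\) has even size, and \(R'\subseteq V_\partial\) still holds.

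\emph{Per-color tree correction.} For each color \(c\), the graph \(H_c\) is connected by assumption, so it has a spanning tree \(T_c\) whose vertex set is \(V_\partial\cap V_c\). Apply Lemma~\ref{lem:tree_parity_correction} to \(T_c\) with \(X=R'_c\). This gives an edge set \(A'_c\subseteq A(T_c)\subseteq A(H_c)\) whose odd-degree vertices are exactly \(R'_c\).

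\emph{Lifting to faces.} Define
\[
E_\partial:=E^\star\oplus\bigoplus_{c}\bigoplus_{a\in A'_c}M_a .
\]
Each \(M_a\subseteq F_\partial\), so \(E_\partial\subseteq F_\partial\).

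The syndrome map is linear over \(\mathbb F_2\), that is, \(S(E\oplus E')=S(E)\oplus S(E')\). Since \(S(M_a)=\{u,v\}\) for an edge \(a=\{u,v\}\), we get
\[
S\Bigl(\bigoplus_{a\in A'_c}M_a\Bigr)=\bigoplus_{a\in A'_c}\{u_a,v_a\},
\]
which is exactly the set of odd-degree vertices of \((V_\partial\cap V_c,A'_c)\), namely \(R'_c\). Summing over the three colors and adding \(S(E^\star)\) gives
\[
S(E_\partial)=S(E^\star)\oplus R'=R.
\]
Linearity is immediate from the definition, since membership in the syndrome is a parity count.

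\emph{Size bound.} We have
\[
|E_\partial|\le 1+2\sum_c|A'_c|\le 1+2\sum_c|A(H_c)|=O(n/\tau),
\]
using the third wall assumption.

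\emph{Main obstacle.} The only delicate step is the parity normalization. Without \(f^\star\), an odd \(|R_c|\) could not be realized by two-face moves, because those moves preserve the parity of each color class. This is exactly why Lemma~\ref{lem:color_parity_invariant} and the existence of \(f^\star\) are needed. The remaining steps are bookkeeping.
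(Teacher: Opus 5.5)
Your proof is correct and follows the paper's argument. You take a spanning tree of each connected $H_c$, apply Lemma~\ref{lem:tree_parity_correction} to get an edge set whose odd-degree vertices are the color-$c$ residual, lift tree edges to the two-face moves $M_a$, and use $f^\star$ for the odd-parity case. The only difference is packaging: you toggle $f^\star$ up front so that every $R'_c$ is even and one tree correction per color suffices. The paper instead corrects $R_c\setminus\{x_c\}$, then routes $x_c$ to the color-$c$ vertex of $f^\star$, and only then adds $f^\star$. Your version is slightly cleaner and avoids the degenerate case $x_c=x_c^\star$.
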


\begin{proof}
By Lemma~\ref{lem:color_parity_invariant},
\[
|R_R|\equiv |R_G|\equiv |R_B|\pmod 2.
\]

First suppose these three quantities are even.  Fix a color
\(c\in\{R,G,B\}\).  Choose a spanning tree \(T_c\) of \(H_c\).  Since
\(|R_c|\) is even, Lemma~\ref{lem:tree_parity_correction} applied to \(T_c\)
with \(X=R_c\) gives a set of tree edges
\[
A_c'\subseteq A(T_c)
\]
whose odd-degree vertices are exactly \(R_c\).

For every selected edge \(a\in A_c'\), include the corresponding elementary
boundary move \(M_a\).  Let
\[
E_c:=\bigoplus_{a\in A_c'} M_a .
\]
Since each move \(M_a\) has syndrome equal to the two endpoints of \(a\), the
syndrome produced by \(E_c\) on color-\(c\) boundary vertices is exactly
\(R_c\).  Doing this independently for \(c=R,G,B\), define
\[
E_{\partial}:=E_R\oplus E_G\oplus E_B.
\]
Then
\[
S(E_{\partial})=R.
\]
Moreover,
\[
|E_{\partial}|
\leq
2\left(|A(T_R)|+|A(T_G)|+|A(T_B)|\right)
=
O(n/\tau).
\]

Now suppose the three quantities \(|R_R|,|R_G|,|R_B|\) are odd.  For each color
\(c\), choose one vertex
\[
x_c\in R_c.
\]
Let \(f^\star\in F_{\partial}\) be the fixed boundary face with vertices
\[
r^\star\in V_R,\qquad
g^\star\in V_G,\qquad
b^\star\in V_B.
\]
Define
\[
x_R^\star:=r^\star,\qquad
x_G^\star:=g^\star,\qquad
x_B^\star:=b^\star.
\]

For each color \(c\), first apply the even case to
\[
R_c\setminus\{x_c\},
\]
which has even cardinality.  This leaves only the residual color-\(c\) syndrome
vertex \(x_c\).  Next apply Lemma~\ref{lem:tree_parity_correction} in \(H_c\)
to the two-element set
\[
\{x_c,x_c^\star\}.
\]
Replacing each selected tree edge by its corresponding boundary move moves the
remaining color-\(c\) syndrome from \(x_c\) to \(x_c^\star\).

Doing this for all three colors leaves residual syndrome exactly
\[
\{r^\star,g^\star,b^\star\}.
\]
Since
\[
S(\{f^\star\})=\{r^\star,g^\star,b^\star\},
\]
adding \(f^\star\) cancels the remaining syndrome.  The total number of boundary
faces used is still
\[
O(|A(H_R)|+|A(H_G)|+|A(H_B)|)=O(n/\tau).
\]
\end{proof}

\begin{lemma}[Deterministic additive guarantee]
\label{lem:deterministic_additive_guarantee}
For every valid syndrome \(S\), the block decoder outputs a correction
\(E_{\mathrm{alg}}\) satisfying
\[
S(E_{\mathrm{alg}})=S
\]
and
\[
|E_{\mathrm{alg}}|
\leq
\OPT(S)+O\left(\frac{n}{\tau}\right).
\]
\end{lemma}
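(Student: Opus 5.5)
The output is \(E_{\mathrm{alg}}:=E_{\mathrm{loc}}\oplus E_{\partial}\), where \(E_\partial\) is given by Lemma~\ref{lem:boundary_repair} applied to \(R\). The proof has two parts: checking that the syndrome is correct, and bounding the weight.

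\emph{Correctness.} The map \(E\mapsto S(E)\) is linear over \(\mathbb F_2\), so
\[
S(E_{\mathrm{alg}})=S(E_{\mathrm{loc}})\oplus S(E_\partial)=S(E_{\mathrm{loc}})\oplus R=S .
\]
Lemma~\ref{lem:boundary_repair} needs \(R\subseteq V_\partial\), which the construction asserts. I would verify it anyway. An interior vertex \(v\in V_{\mathrm{int}}\) lies on no boundary face. Every face containing \(v\) is interior, and these faces are pairwise vertex-adjacent, so they all lie in a single block \(B_i\). Hence only \(E_i\) can toggle \(v\), and the local constraint \(S(E_i)\cap I(B_i)=S_i\) gives \(v\in S(E_{\mathrm{loc}})\iff v\in S\). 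So \(v\notin R\).

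\emph{Weight bound.} We have \(|E_{\mathrm{alg}}|\le \sum_i|E_i|+|E_\partial|\), and \(|E_\partial|=O(n/\tau)\) by Lemma~\ref{lem:boundary_repair}. It remains to show \(\sum_i|E_i|\le \OPT(S)\). Let \(E^\ast\) be an optimal correction with \(S(E^\ast)=S\), and set \(E^\ast_i:=E^\ast\cap F(B_i)\). The blocks partition \(F_{\mathrm{int}}\), so these sets are disjoint. For \(v\in I(B_i)\), every face of \(E^\ast\) incident to \(v\) is interior and lies in \(B_i\) by the argument above. Therefore \(S(E^\ast_i)\cap I(B_i)=S\cap I(B_i)=S_i\), so \(E_i^\ast\) is feasible for the local problem in \(B_i\). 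Since \(E_i\) is a minimum local solution, \(|E_i|\le|E^\ast_i|\). Summing gives
\[
\sum_i|E_i|\le |E^\ast\cap F_{\mathrm{int}}|\le \OPT(S).
\]
Combining the two bounds yields \(|E_{\mathrm{alg}}|\le \OPT(S)+O(n/\tau)\).

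\emph{Main obstacle.} The delicate step is the locality claim: an interior vertex is touched only by faces of a single block. This is what makes both \(R\subseteq V_\partial\) and the feasibility of \(E_i^\ast\) hold. It follows directly from the definitions of blocks as vertex-adjacency components and of \(V_{\mathrm{int}}\), and I would state it as a separate short claim used in both parts of the proof.
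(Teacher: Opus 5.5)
Your proposal is correct and follows the paper's proof essentially step for step. Like the paper, you restrict an optimal correction $E^\star$ to each block to get a feasible local solution, sum to obtain $|E_{\mathrm{loc}}|\le\OPT(S)$, and then add the $O(n/\tau)$ boundary repair using linearity of the syndrome map. Your explicit check that $R\subseteq V_\partial$, via the same single-block locality claim, is a small addition; the paper simply asserts this holds by construction.
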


\begin{proof}
Let \(E^\star\) be a minimum-size correction for \(S\), so
\[
S(E^\star)=S
\qquad\text{and}\qquad
|E^\star|=\OPT(S).
\]
Fix a block \(B_i\).  Since vertices in \(I(B_i)\) are not incident to boundary
faces, every face incident to a vertex of \(I(B_i)\) is an interior face.
Moreover, all interior faces incident to a common interior vertex lie in the
same block, because blocks are connected components of the interior-face
sublattice, with adjacency defined by sharing a vertex.

Therefore
\[
E^\star_i:=E^\star\cap F(B_i)
\]
is feasible for the local problem in \(B_i\), since
\[
S(E^\star_i)\cap I(B_i)=S\cap I(B_i)=S_i.
\]
The local decoder chooses a minimum-size feasible set \(E_i\), hence
\[
|E_i|\leq |E^\star_i|.
\]
Summing over all blocks gives
\[
|E_{\mathrm{loc}}|
\leq
\sum_i |E_i|
\leq
\sum_i |E^\star\cap F(B_i)|
\leq
|E^\star|
=
\OPT(S).
\]

By construction,
\[
R:=S\oplus S(E_{\mathrm{loc}})
\]
is supported on the boundary vertices.  Lemma~\ref{lem:boundary_repair} gives a
set of boundary faces \(E_{\partial}\subseteq F_{\partial}\) such that
\[
S(E_{\partial})=R
\qquad\text{and}\qquad
|E_{\partial}|=O(n/\tau).
\]
Let
\[
E_{\mathrm{alg}}:=E_{\mathrm{loc}}\oplus E_{\partial}.
\]
Then
\[
S(E_{\mathrm{alg}})
=
S(E_{\mathrm{loc}})\oplus S(E_{\partial})
=
S(E_{\mathrm{loc}})\oplus R
=
S.
\]
Also,
\[
|E_{\mathrm{alg}}|
\leq
|E_{\mathrm{loc}}|+|E_{\partial}|
\leq
\OPT(S)+O(n/\tau).
\]
This proves the claim.
\end{proof}

\subsection{Time Complexity}

Each block contains \(O(\tau^2)\) faces, so the local optimization problem
inside one block can be solved in time
\[
\tau^2 2^{O(\tau)}.
\]
The number of blocks is \(O(n/\tau^2)\).  Therefore the total time for all local
computations is
\[
O\left(\frac{n}{\tau^2}\cdot \tau^2 2^{O(\tau)}\right)
=
n2^{O(\tau)}.
\]
By Lemma~\ref{lem:boundary_repair}, resolving the residual syndrome takes time
\[
O(n/\tau).
\]
Hence the total running time is
\[
n2^{O(\tau)}+O(n/\tau)
=
n2^{O(\tau)}.
\]
Setting \(\tau=\Theta(\epsilon^{-1})\), this becomes
\[
n2^{O(\epsilon^{-1})}.
\]

\subsection{Average-case Approximation Ratio}

To avoid the degenerate ratio \(0/0\), define
\[
\Gamma(S)
:=
\begin{cases}
1, & S=\emptyset,\\[2mm]
\displaystyle \frac{\SOL(S)}{\OPT(S)}, & S\neq\emptyset.
\end{cases}
\]
If \(S=\emptyset\), the decoder returns the empty correction, and the guarantee
is trivial.  Thus, for the ratio analysis, it remains to consider the case
\(|S|\geq 1\).  In this case \(\OPT(S)\geq 1\).  Since every face is incident
to three vertices, every feasible correction has size at least \(|S|/3\).
Equivalently,
\[
|S|\leq 3\OPT(S),
\]
and hence
\[
\frac{1}{\OPT(S)}
\leq
\frac{3}{|S|}
\leq
\frac{6}{1+|S|}.
\]

By Lemma~\ref{lem:deterministic_additive_guarantee},
\[
\SOL(S)
\leq
\OPT(S)+O(n/\tau).
\]
Therefore, for \(S\neq\emptyset\),
\[
\Gamma(S)
\leq
1+
O\left(\frac{n}{\tau\OPT(S)}\right)
\leq
1+
\frac{n}{\tau}\cdot
O\left(\frac{1}{1+|S|}\right).
\]
The same inequality also holds for \(S=\emptyset\), by the definition
\(\Gamma(\emptyset)=1\).  Taking expectations gives
\[
\mathbb{E}[\Gamma(S)]
\leq
1+
\frac{n}{\tau}\cdot
O\left(
\mathbb{E}\left[\frac{1}{1+|S|}\right]
\right).
\]

We next study the quantity \((1+|S|)^{-1}\), treating \(|S|\) as a random
variable.

\begin{lemma}
\label{lem:prob_S_v1}
Assume iid face noise with probability \(p\).  Let \(S_v\) be the indicator
random variable for the event \(v\in S\).  If \(v\) is incident to \(d_v\)
triangles, then
\[
\Pr[S_v=1]
=
\frac{1-(1-2p)^{d_v}}{2}.
\]
In particular, if every vertex is incident to exactly \(d\) triangles, then
\[
\Pr[S_v=1]
=
\frac{1-(1-2p)^d}{2}.
\]
\end{lemma}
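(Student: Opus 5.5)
The event \(v\in S\) depends only on the \(d_v\) faces incident to \(v\). By the definition of \(S(E)\), \(S_v=1\) exactly when an odd number of these faces lie in the error. Let \(X_1,\dots,X_{d_v}\) be the independent Bernoulli\((p)\) indicators of the incident faces. Then
\[
S_v \equiv X_1+\cdots+X_{d_v}\pmod 2,
\]
so the task is to compute the probability that a Binomial\((d_v,p)\) variable is odd.

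I will use the standard sign-variable identity. Note that \((-1)^{S_v}=\prod_{j=1}^{d_v}(-1)^{X_j}\). Taking expectations and using independence,
\[
\mathbb{E}\bigl[(-1)^{S_v}\bigr]=\prod_{j=1}^{d_v}\mathbb{E}\bigl[(-1)^{X_j}\bigr]=\bigl((1-p)-p\bigr)^{d_v}=(1-2p)^{d_v}.
\]
On the other hand, \(\mathbb{E}[(-1)^{S_v}]=\Pr[S_v=0]-\Pr[S_v=1]=1-2\Pr[S_v=1]\). Solving this linear equation gives \(\Pr[S_v=1]=\tfrac{1-(1-2p)^{d_v}}{2}\).

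As an alternative to the sign identity, one can expand \(\bigl((1-p)+p\bigr)^{d_v}\) and \(\bigl((1-p)-p\bigr)^{d_v}\) binomially. Subtracting the two expansions isolates twice the sum of the odd-index terms, which is twice the odd-parity probability.

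The specialization to the case where every vertex has exactly \(d\) incident faces is immediate by setting \(d_v=d\). There is essentially no obstacle here. The only point to state explicitly is that the faces incident to \(v\) are distinct and their noise indicators are independent, which holds under the iid face-noise model.
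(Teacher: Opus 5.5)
Your proof is correct and follows essentially the same route as the paper. Both use independence to compute \(\mathbb{E}[(-1)^{\sum_j X_j}]=(1-2p)^{d_v}\), then combine it with \(\Pr(\text{even})-\Pr(\text{odd})=1-2\Pr[S_v=1]\) to solve for the odd-parity probability.
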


\begin{proof}
First, \(S_v=1\) if and only if an odd number of faces incident to \(v\) are in
the error set.  Let \(1,\dots,d_v\) index the faces incident to \(v\), and let
\(E_i=1\) if face \(i\) is in error and \(E_i=0\) otherwise.  By assumption,
the \(E_i\)'s are independent Bernoulli\((p)\) random variables.  Let
\[
X_v=\sum_{i=1}^{d_v} E_i.
\]
Then
\[
\mathbb E[(-1)^{X_v}]
=
\Pr(X_v\text{ is even})-\Pr(X_v\text{ is odd}).
\]
Since
\[
1=\Pr(X_v\text{ is even})+\Pr(X_v\text{ is odd}),
\]
we get
\[
\Pr(X_v\text{ is odd})
=
\frac{1-\mathbb E[(-1)^{X_v}]}{2}.
\]
By independence,
\[
\mathbb E[(-1)^{X_v}]
=
\prod_{i=1}^{d_v}\mathbb E[(-1)^{E_i}]
=
(1-2p)^{d_v}.
\]
Therefore
\[
\Pr[S_v=1]
=
\Pr(X_v\text{ is odd})
=
\frac{1-(1-2p)^{d_v}}{2}.
\]
The regular case \(d_v=d\) follows immediately.
\end{proof}

\begin{lemma}
\label{lem:expected_S}
Assume iid face noise with probability \(p\), and let \(d_v\) be the number of
faces incident to \(v\).  Then
\[
\mu:=\mathbb E[|S|]
=
\sum_{v\in V}
\frac{1-(1-2p)^{d_v}}{2}.
\]
In the regular case where every vertex is incident to exactly \(d\) faces,
\[
\mu
=
n\cdot
\frac{1-(1-2p)^d}{2}.
\]
\end{lemma}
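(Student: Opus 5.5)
The plan is to write \(|S|\) as a sum of indicator variables and apply linearity of expectation, then evaluate each summand with Lemma~\ref{lem:prob_S_v1}. No independence between the different \(S_v\) is needed, which matters because neighbouring vertices share faces and their indicators are correlated.

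First, for every vertex \(v\in V\), let \(S_v=\mathbf 1[v\in S]\) as in Lemma~\ref{lem:prob_S_v1}. Then
\[
|S|=\sum_{v\in V} S_v ,
\]
and linearity of expectation gives
\[
\mathbb E[|S|]=\sum_{v\in V}\mathbb E[S_v]=\sum_{v\in V}\Pr[S_v=1].
\]

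Second, Lemma~\ref{lem:prob_S_v1} applies vertex by vertex, since it only concerns the \(d_v\) faces incident to the single vertex \(v\), and those faces are independent Bernoulli\((p)\). Substituting \(\Pr[S_v=1]=\tfrac{1-(1-2p)^{d_v}}{2}\) gives the general formula.

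Third, in the regular case every term equals \(\tfrac{1-(1-2p)^d}{2}\), and there are \(|V|=n\) terms. This gives \(\mu=n\cdot\tfrac{1-(1-2p)^d}{2}\).

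I do not expect a real obstacle here. The only point to state explicitly is that linearity holds despite the dependence among the \(S_v\).
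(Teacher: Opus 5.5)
Your proposal is correct and follows the paper's proof exactly. Both write \(|S|\) as a sum of the indicators \(S_v\), apply linearity of expectation, substitute Lemma~\ref{lem:prob_S_v1} vertex by vertex, and obtain the regular case by summing \(n\) identical terms.
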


\begin{proof}
Using the indicator variables \(S_v\),
\[
\mathbb E[|S|]
=
\sum_{v\in V}\mathbb E[S_v]
=
\sum_{v\in V}\Pr(S_v=1).
\]
Substituting Lemma~\ref{lem:prob_S_v1} gives the first claim.  If
\(d_v=d\) for every vertex \(v\), this simplifies to the stated regular-case
formula.
\end{proof}


\begin{lemma}[Lower tail for the syndrome size]
\label{lem:syndrome_lower_tail_dependency}
Assume iid face noise with probability \(p\).  Assume every face is triangular
and every vertex is incident to at most \(d\) faces.  Let
\[
S_v:=\mathbf 1[v\in S],
\qquad
|S|=\sum_{v\in V} S_v,
\]
and let
\[
\mu:=\mathbb E[|S|].
\]
Then
\[
\Pr\left(|S|<\frac{\mu}{2}\right)
\leq
(2d+1)\exp\left(-\frac{\mu}{8(2d+1)^2}\right).
\]
In particular, for constant \(d\),
\[
\Pr\left(|S|<\frac{\mu}{2}\right)
\leq
\exp(-\Omega(\mu)).
\]
\end{lemma}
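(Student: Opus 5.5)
The plan is to exploit the bounded dependency structure of the indicators \(S_v\) through a proper coloring of a dependency graph. After the coloring, each color class is a sum of independent Bernoulli variables, and a Chernoff lower-tail bound applies to each class. Define the dependency graph \(\mathcal D\) on \(V\) by joining \(u\neq v\) whenever some face is incident to both. The indicator \(S_v\) is a function only of the independent face variables \(\{\mathbf 1[f\in N]: v\in f\}\). Hence any set of vertices that is independent in \(\mathcal D\) has pairwise disjoint face neighborhoods, and the corresponding \(S_v\) are mutually independent. Each vertex lies in at most \(d\) faces, and each face contains two other vertices, so the maximum degree of \(\mathcal D\) is at most \(2d\). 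Greedy coloring then partitions \(V\) into \(k\le 2d+1\) classes \(C_1,\dots,C_k\), each independent in \(\mathcal D\). (If fewer colors are used, the bound only improves, so we may pad with empty classes to get exactly \(2d+1\).)

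Write \(X_j=\sum_{v\in C_j}S_v\) and \(\mu_j=\mathbb E[X_j]\), so that \(\sum_j\mu_j=\mu\). The key step is a union-bound reduction. If \(|S|<\mu/2\), then some class must fall short in an absolute sense relative to \(\mu\). Fall-short is not meaningful for classes with tiny \(\mu_j\), so we use an additive deviation. Since \(\sum_j (\mu_j - X_j) = \mu-|S| > \mu/2\), pigeonhole gives some \(j\) with \(\mu_j - X_j > \mu/(2k)\). For each \(j\), apply Hoeffding's inequality to the independent \([0,1]\)-valued summands of \(X_j\):
\[
\Pr\bigl(X_j\le \mu_j - t\bigr)\le \exp\bigl(-2t^2/|C_j|\bigr).
\]
This would give a bound in terms of \(|C_j|\le n\) rather than \(\mu\). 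That is fine when \(\mu=\Theta(n)\), but the stated form \(\exp(-\mu/(8k^2))\) suggests using instead the multiplicative Chernoff bound in its additive form for sums of independent Bernoullis:
\[
\Pr(X_j\le \mu_j - t)\le \exp\bigl(-t^2/(2\mu_j)\bigr).
\]
With \(t=\mu/(2k)\) and \(\mu_j\le\mu\), this yields \(\exp(-\mu^2/(8k^2\mu))=\exp(-\mu/(8k^2))\). (Note that \(t>\mu_j\) makes the event impossible, so the bound holds trivially there.) A union bound over the \(k\le 2d+1\) classes then gives exactly
\[
(2d+1)\exp\!\left(-\frac{\mu}{8(2d+1)^2}\right).
\]

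The main point needing care is the exact constants. Rather than the standard relative-deviation form \(\exp(-\delta^2\mu_j/2)\), I would use the version \(\Pr(X\le \mu_X-t)\le\exp(-t^2/(2\mu_X))\). It follows from it by setting \(\delta=t/\mu_X\), which is valid for \(0<t\le\mu_X\). Then bounding \(\mu_X\le\mu\) makes the exponent uniform across classes, and the per-class \(\mu_j\) never needs to be controlled from below. The "in particular" statement is immediate, since for constant \(d\) the prefactor \(2d+1\) is absorbed into the exponent once \(\mu\) exceeds a constant. For \(\mu=O(1)\), the bound \(\exp(-\Omega(\mu))\) can be made \(\ge 1\) by adjusting the hidden constant.
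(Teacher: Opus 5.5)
Your proposal is correct and follows the paper's proof step for step: the same dependency graph of maximum degree $2d$, a proper $(2d+1)$-coloring into classes of mutually independent indicators, the additive Chernoff bound $\Pr(X_j<\mu_j-a)\le\exp(-a^2/(2\mu_j))$ with $a=\mu/(2(2d+1))$ and $\mu_j\le\mu$, and a union bound over the classes (your pigeonhole step is just the contrapositive of the paper's inclusion of events). One small slip is in your last remark: $\exp(-c\mu)$ with $c>0$ can never be made $\ge 1$ for $\mu>0$, so the ``in particular'' clause should be read asymptotically as $\mu\to\infty$, where the prefactor $2d+1$ is absorbed into the exponent; this is how the paper uses it, with $\mu=n\alpha$.
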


\begin{proof}
Construct a dependency graph \(D\) on vertex set \(V\) as follows. Two vertices
\(u,v\in V\) are adjacent in \(D\) if there exists a face incident to both
\(u\) and \(v\). Equivalently, \(S_u\) and \(S_v\) are adjacent if they may
depend on a common face-error indicator.

Since every vertex is incident to at most \(d\) triangular faces, and each such
face contains at most two other vertices, the maximum degree of \(D\) is at most
\(2d\). Hence \(D\) has a proper coloring with
\[
q:=2d+1
\]
colors. Let
\[
V=V_1\sqcup\cdots\sqcup V_q
\]
be the corresponding color classes.

For each \(j\in[q]\), define
\[
X_j:=\sum_{v\in V_j} S_v,
\qquad
\mu_j:=\mathbb E[X_j].
\]
Within a fixed color class \(V_j\), no two vertices share an incident face.
Therefore the random variables \(\{S_v:v\in V_j\}\) depend on disjoint sets of
face-error indicators, and hence they are mutually independent.

We now apply an additive Chernoff bound to each \(X_j\). For every \(a>0\),
\[
\Pr(X_j<\mu_j-a)
\leq
\exp\left(-\frac{a^2}{2\mu_j}\right),
\]
with the convention that the probability is zero if \(a>\mu_j\).

Take
\[
a:=\frac{\mu}{2q}.
\]
If
\[
X_j\geq \mu_j-\frac{\mu}{2q}
\]
for every \(j\in[q]\), then
\[
|S|
=
\sum_{j=1}^q X_j
\geq
\sum_{j=1}^q \mu_j
-
\sum_{j=1}^q \frac{\mu}{2q}
=
\mu-\frac{\mu}{2}
=
\frac{\mu}{2}.
\]
Therefore
\[
\left\{|S|<\frac{\mu}{2}\right\}
\subseteq
\bigcup_{j=1}^q
\left\{
X_j<\mu_j-\frac{\mu}{2q}
\right\}.
\]
By the union bound,
\[
\Pr\left(|S|<\frac{\mu}{2}\right)
\leq
\sum_{j=1}^q
\Pr\left(
X_j<\mu_j-\frac{\mu}{2q}
\right).
\]
For each \(j\), either \(\mu_j<\mu/(2q)\), in which case the corresponding
event is empty, or else Chernoff gives
\[
\Pr\left(
X_j<\mu_j-\frac{\mu}{2q}
\right)
\leq
\exp\left(
-\frac{\mu^2}{8q^2\mu_j}
\right).
\]
Since \(\mu_j\leq \mu\), this is at most
\[
\exp\left(-\frac{\mu}{8q^2}\right).
\]
Thus
\[
\Pr\left(|S|<\frac{\mu}{2}\right)
\leq
q\exp\left(-\frac{\mu}{8q^2}\right).
\]
Substituting \(q=2d+1\) completes the proof.
\end{proof}

\begin{lemma}[Upper bound on expectation]
\label{lem:upper_bound_on_expectation}
Let
\[
\mu:=\mathbb E[|S|].
\]
If every face is triangular and every vertex is incident to at most \(d\) faces,
then
\[
\mathbb{E}\left[\frac{1}{1+|S|}\right]
\leq
\frac{2}{1+\mu}
+
(2d+1)\exp\left(-\frac{\mu}{8(2d+1)^2}\right).
\]
In particular, for constant \(d\),
\[
\mathbb{E}\left[\frac{1}{1+|S|}\right]
\leq
\frac{2}{1+\mu}
+
\exp(-\Omega(\mu)).
\]
\end{lemma}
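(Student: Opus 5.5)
The plan is to split the expectation according to whether the lower-tail event of Lemma~\ref{lem:syndrome_lower_tail_dependency} occurs. Let \(A:=\{|S|\geq \mu/2\}\), and write
\[
\mathbb{E}\left[\frac{1}{1+|S|}\right]
=
\mathbb{E}\left[\frac{1}{1+|S|}\mathbf 1_A\right]
+
\mathbb{E}\left[\frac{1}{1+|S|}\mathbf 1_{A^c}\right].
\]
The two terms will be bounded separately, and these two bounds give the two summands of the claimed estimate.

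On \(A\) we have \(1+|S|\geq 1+\mu/2=(2+\mu)/2\). Hence
\[
\frac{1}{1+|S|}\leq \frac{2}{2+\mu}\leq\frac{2}{1+\mu}.
\]
The first term is therefore at most \(\frac{2}{1+\mu}\Pr(A)\leq \frac{2}{1+\mu}\).

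On \(A^c\) we use only the trivial bound \(\frac{1}{1+|S|}\leq 1\), which holds because \(|S|\geq 0\). The second term is then at most \(\Pr(|S|<\mu/2)\). Lemma~\ref{lem:syndrome_lower_tail_dependency} applies under exactly the stated hypotheses: triangular faces, each vertex incident to at most \(d\) faces, and iid noise, the latter being implicit as in the surrounding discussion. It gives
\[
\Pr(|S|<\mu/2)\leq (2d+1)\exp\left(-\frac{\mu}{8(2d+1)^2}\right).
\]

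Adding the two bounds yields the main inequality. For the ``in particular'' statement, note that when \(d=O(1)\) the factor \(2d+1\) is a constant and the exponent is \(-\mu/O(1)\). Absorbing the constant prefactor gives \(\exp(-\Omega(\mu))\) once \(\mu\) exceeds a constant. For \(\mu=O(1)\), the claim remains true: the left-hand side is at most \(1\le 2/(1+\mu)+\exp(-c\mu)\) after suitably choosing the implied constant, so the bound holds there as well.

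No step is a real obstacle. The only points needing care are the harmless inequality \(2/(2+\mu)\leq 2/(1+\mu)\) and the remark that absorbing \(2d+1\) into the \(\Omega\) requires either \(\mu\) large or a suitable choice of constant.
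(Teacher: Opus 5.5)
Your proposal is correct and follows the paper's proof essentially verbatim: split on $\{|S|\geq \mu/2\}$, bound $\frac{1}{1+|S|}$ by $\frac{2}{1+\mu}$ there and by $1$ on the complement, then invoke Lemma~\ref{lem:syndrome_lower_tail_dependency}. Your extra care for the case $\mu=O(1)$ when absorbing the factor $2d+1$ into $\exp(-\Omega(\mu))$ is a valid refinement of a step the paper states without comment.
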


\begin{proof}
Split according to whether \(|S|\geq \mu/2\). If \(|S|\geq \mu/2\), then
\[
\frac{1}{1+|S|}
\leq
\frac{2}{1+\mu}.
\]
If \(|S|<\mu/2\), we use the trivial bound
\[
\frac{1}{1+|S|}
\leq
1.
\]
Therefore
\[
\mathbb{E}\left[\frac{1}{1+|S|}\right]
\leq
\frac{2}{1+\mu}
+
\Pr\left(|S|<\frac{\mu}{2}\right).
\]
Applying Lemma~\ref{lem:syndrome_lower_tail_dependency}, we get
\[
\mathbb{E}\left[\frac{1}{1+|S|}\right]
\leq
\frac{2}{1+\mu}
+
(2d+1)\exp\left(-\frac{\mu}{8(2d+1)^2}\right).
\]
For constant \(d\), the second term is \(\exp(-\Omega(\mu))\).
\end{proof}
Applying Lemma~\ref{lem:upper_bound_on_expectation},
\[
\mathbb{E}\left[\frac{1}{1+|S|}\right]
\leq
\frac{2}{1+\mu}
+
\exp(-\Omega(\mu)),
\]
for constant \(d\). In the regular case, where every vertex is incident to
\(d\) faces, by Lemma \ref{lem:expected_S},
\[
\mu
=
n\alpha,
\qquad
\alpha
:=
\frac{1-(1-2p)^d}{2}.
\]
Therefore
\[
\mathbb{E}\left[\frac{1}{1+|S|}\right]
\leq
\frac{2}{1+n\alpha}
+
\exp(-\Omega(n\alpha)).
\]
Hence
\[
\frac{n}{\tau}\cdot
\mathbb{E}\left[\frac{1}{1+|S|}\right]
\leq
O\left(\frac{1}{\tau\alpha}\right)
+
\frac{n}{\tau}\exp(-\Omega(n\alpha)).
\]
When \(n\alpha\to\infty\), the second term is negligible. Thus
\[
\frac{n}{\tau}\cdot
\mathbb{E}\left[\frac{1}{1+|S|}\right]
=
O\left(\frac{1}{\tau\alpha}\right).
\]

For small \(p\) and constant \(d\),
\[
\alpha
=
\frac{1-(1-2p)^d}{2}
=
dp+O(p^2).
\]
Therefore
\[
\frac{n}{\tau}\cdot
\mathbb{E}\left[\frac{1}{1+|S|}\right]
=
O\left(\frac{1}{\tau d p}\right),
\]
up to lower-order terms. To get an expected approximation bound by
\(1+\epsilon/2\), it suffices to take
\[
\tau
=
\Theta\left(\frac{1}{\epsilon\alpha}\right).
\]
For small \(p\) and constant \(d\), this is
\[
\tau
=
\Theta\left(\frac{1}{\epsilon d p}\right).
\]
For constant \(p,d\), this simplifies to
\[
\tau=\Theta(\epsilon^{-1}).
\]

\subsection{High-probability Average-case Approximation Ratio}

In this subsection we convert the expected approximation guarantee into a
high-probability guarantee. The key point is that it is enough to prove that
the syndrome size \(|S|\) is unlikely to be much smaller than its expectation.

Recall that the algorithm satisfies
\[
\SOL
\leq
\OPT + C\frac{n}{\tau}
\]
for some absolute constant \(C>0\). Therefore, whenever \(\OPT>0\),
\[
\frac{\SOL}{\OPT}
\leq
1 + C\frac{n}{\tau \OPT}.
\]
Since every face is incident to three vertices, every feasible correction for
\(S\) has size at least \(|S|/3\). Hence, whenever \(|S|\geq 1\),
\[
\OPT \geq \frac{|S|}{3}.
\]
Therefore
\[
\frac{\SOL}{\OPT}
\leq
1 + 3C\frac{n}{\tau |S|}
\leq
1 + 6C\frac{n}{\tau(1+|S|)}.
\]
Let
\[
A:=6C.
\]
Thus, on every nonzero syndrome,
\begin{equation}
\label{eq:ratio_bound_by_syndrome_size}
\frac{\SOL}{\OPT}
\leq
1 + A\frac{n}{\tau(1+|S|)}.
\end{equation}

In the regular case where every vertex is incident to \(d\) faces, we have
\[
\mu
:=
\mathbb E[|S|]
=
n\alpha,
\qquad
\alpha
:=
\frac{1-(1-2p)^d}{2}.
\]

We use the bounded-dependency lower-tail bound from
Lemma~\ref{lem:syndrome_lower_tail_dependency}. Namely, since every face is
triangular and every vertex is incident to at most \(d\) faces,
\[
\Pr\left(|S|<\frac{\mu}{2}\right)
\leq
(2d+1)
\exp\left(
-\frac{\mu}{8(2d+1)^2}
\right).
\]
Substituting \(\mu=n\alpha\), this becomes
\begin{equation}
\label{eq:syndrome_lower_tail_alpha}
\Pr\left(|S|<\frac{\mu}{2}\right)
\leq
(2d+1)
\exp\left(
-\frac{n\alpha}{8(2d+1)^2}
\right).
\end{equation}

Now choose
\[
\tau
\geq
\frac{2A}{\alpha\epsilon}.
\]
Then
\[
\frac{A n}{\epsilon\tau}
\leq
\frac{\alpha n}{2}
=
\frac{\mu}{2}.
\]
Consequently, by \eqref{eq:ratio_bound_by_syndrome_size}, if
\[
\frac{\SOL}{\OPT}>1+\epsilon,
\]
then
\[
1+|S|
<
\frac{A n}{\epsilon\tau}
\leq
\frac{\mu}{2}.
\]
In particular,
\[
\Pr\left(\frac{\SOL}{\OPT}>1+\epsilon\right)
\leq
\Pr\left(|S|<\frac{\mu}{2}\right).
\]
Applying \eqref{eq:syndrome_lower_tail_alpha}, we obtain
\[
\Pr\left(\frac{\SOL}{\OPT}>1+\epsilon\right)
\leq
(2d+1)
\exp\left(
-\frac{n\alpha}{8(2d+1)^2}
\right).
\]

Thus, if
\[
\tau
=
\Theta\left(\frac{1}{\alpha\epsilon}\right),
\]
then the algorithm outputs a \((1+\epsilon)\)-approximation with probability at
least
\[
1
-
(2d+1)
\exp\left(
-\frac{n\alpha}{8(2d+1)^2}
\right).
\]

For constant \(p,d\) with \(\alpha>0\), this simplifies to the following:
taking
\[
\tau=\Theta(\epsilon^{-1})
\]
gives a \((1+\epsilon)\)-approximation with probability at least
\[
1-\exp(-\Omega(n)).
\]

For small \(p\) and constant \(d\), we have
\(
\alpha
=
dp+O(p^2).
\)
Therefore the same argument gives a \((1+\epsilon)\)-approximation by taking
\[
\tau
=
\Theta\left(\frac{1}{\epsilon d p}\right),
\]
with failure probability
\(
\exp(-\Omega(np)).
\)
In particular, this high-probability guarantee is meaningful whenever
\(
np\to\infty.
\)

\subsection{Controlling Homology Classes}

The approximation guarantee above concerns the minimum-size realization of the
observed syndrome.  For topological decoding, however, the homology class of the
correction is also relevant, because two corrections with the same syndrome may
differ by a nontrivial logical operator.  In this subsection we describe a
postprocessing step that allows the output to be forced into any prescribed
homology class at an additional additive cost \(O(n/\tau)\).  This should be
viewed as a homology-control mechanism, not as a guarantee that the decoder
knows the homology class of the true error.

Suppose the code is embedded on a surface of genus \(g\). Fix a basis
\(
\gamma_1,\dots,\gamma_{2g}
\)
for the nontrivial homology classes of the surface. For a face set
\(
E\subseteq F,
\)
define its homology signature
\(
h(E)\in\mathbb{F}_2^{2g}
\)
by
\(
h(E)_i
:=
|E\cap \gamma_i|
\pmod 2.
\)
Observe that homology is additive modulo \(2\):
\[
h(E_1\oplus E_2)
=
h(E_1)+h(E_2).
\]

The syndrome operator does not determine homology. Indeed, if
\(
S(Z)=\emptyset,
\)
then for every face set \(E\),
\[
S(E\oplus Z)=S(E).
\]
Thus adding a zero-syndrome cycle changes the correction while preserving the syndrome.

We assume that the boundary-wall system contains zero-syndrome representatives
of a homology basis, each of size \(O(n/\tau)\).
For each nontrivial homology class, fix a representative cycle
\[
Z_1,\dots,Z_{2g}\subseteq F_{\partial}
\]
such that
\(
S(Z_i)=\emptyset
\)
and
\(
h(Z_i)=e_i,
\)
where \(e_i\) is the \(i\)-th standard basis vector of
\(\mathbb{F}_2^{2g}\). This is an additional geometric assumption on the chosen wall system: the walls
must contain zero-syndrome representatives of a homology basis, each of size
\(O(n/\tau)\). 

Let
\(
E_{\mathrm{alg}}
=
E_{\mathrm{loc}}\oplus E_{\partial}
\)
be the correction produced by the approximation algorithm. Its homology class is
\(
h(E_{\mathrm{alg}})
=
h(E_{\mathrm{loc}})
+
h(E_{\partial}).
\)

Suppose we wish to output a correction in a prescribed homology class
\[
h_0\in\mathbb{F}_2^{2g}.
\]
In particular, if one wishes to restrict attention to homologically trivial
corrections, one may take
\(
h_0=0.
\)
If the desired homology class is known or prescribed, the construction below
forces the output into that class.  If the desired homology class is unknown,
one may instead run the procedure for all \(2^{2g}\) homology classes and return
the minimum-size resulting correction, as in minimum-weight decoding.  Since the
genus \(g\) is constant, this increases the running time only by a constant
factor.
Define
\[
\delta
:=
h_0+h(E_{\mathrm{alg}}).
\]
In component form:
\[
\delta=(\delta_1,\dots,\delta_{2g})
\in\mathbb{F}_2^{2g}.
\]
Next, define
\[
Z(\delta)
:=
\bigoplus_{i:\delta_i=1} Z_i,
\]
i.e., \(Z(\delta)\) is the symmetric difference of all homology-correction
cycles whose corresponding coordinate of \(\delta\) equals \(1\).
The final correction is
\[
E_{\mathrm{final}}
:=
E_{\mathrm{alg}}\oplus Z(\delta).
\]

Since each \(Z_i\) has empty syndrome,
\[
S(E_{\mathrm{final}})
=
S(E_{\mathrm{alg}})
=
S.
\]
Moreover,
\[
h(E_{\mathrm{final}})
=
h(E_{\mathrm{alg}})
+
h(Z(\delta))
=
h_0.
\]
Hence the homology class of the output can be controlled independently of the syndrome.

Finally, since the cycles \(Z_i\) are supported on the boundary walls,
\[
|Z_i|=O(n/\tau).
\]
Because the genus is constant, only constantly many such cycles are required. Therefore homology correction changes the solution size by at most another additive
\(
O(n/\tau),
\)
and the approximation guarantee remains unchanged.

For constant genus \(g\), all of the steps above, i.e., precomputing the cycles
\(Z_i\), determining \(h(E_{\mathrm{alg}})\), and applying \(Z(\delta)\),
can be implemented in \(O(n)\) time. Thus the homology-aware modification does
not change the asymptotic running time.

\subsection{Smoothed Average-case Approximation}
\label{subsec:smoothed_average_case}

We now show that the same approximation guarantee holds in a smoothed model.
Instead of assuming that the entire error is random, we allow an arbitrary
adversarial error pattern and then add independent random noise.

Let \(E_0\subseteq F\) be an arbitrary fixed error set. Let
\(N_p\subseteq F\) be a random face set in which each face is included
independently with probability \(p\). The observed error is
\[
E := E_0\oplus N_p,
\]
and the observed syndrome is
\[
S := S(E_0\oplus N_p).
\]
The adversarial set \(E_0\) is completely arbitrary and may depend on the
lattice size. The randomness is only over \(N_p\).

As before, assume that every vertex is incident to exactly \(d\) faces. Define
\[
\alpha
:=
\frac{1-(1-2p)^d}{2}
\]
and
\[
\beta
:=
\min\{\alpha,1-\alpha\}.
\]
We assume \(\beta>0\). For constant \(p,d\) bounded away from the degenerate
cases, \(\beta=\Theta(1)\).

\begin{lemma}[Syndrome density under smoothed noise]
\label{lem:smoothed_syndrome_density}
Let \(E_0\subseteq F\) be arbitrary and let \(N_p\) be independent face noise.
Then
\[
\mathbb E[|S(E_0\oplus N_p)|]\geq \beta n.
\]
Moreover,
\[
\Pr\left(
|S(E_0\oplus N_p)| < \frac{\beta n}{2}
\right)
\leq
(2d+1)
\exp\left(
-\frac{\beta n}{8(2d+1)^2}
\right).
\]
In particular, for constant \(d\),
\[
\Pr\left(
|S(E_0\oplus N_p)| < \frac{\beta n}{2}
\right)
\leq
\exp(-\Omega(\beta n)).
\]
\end{lemma}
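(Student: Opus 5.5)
The plan is to fix a vertex \(v\) and compute \(\Pr[v\in S(E_0\oplus N_p)]\) exactly, in the same way as Lemma~\ref{lem:prob_S_v1}. Let \(a_v\in\{0,1\}\) be the parity of the number of faces of \(E_0\) incident to \(v\); this is deterministic. The parity at \(v\) of \(E_0\oplus N_p\) is \(a_v\) plus the parity of the number of noise faces at \(v\), since symmetric difference adds indicators modulo \(2\).

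By Lemma~\ref{lem:prob_S_v1} the noise parity is odd with probability \(\alpha\). So \(\Pr[v\in S]=\alpha\) if \(a_v=0\), and \(\Pr[v\in S]=1-\alpha\) if \(a_v=1\). In both cases \(\Pr[v\in S]\ge\beta\). Summing over the \(n\) vertices by linearity gives \(\mathbb E|S|\ge\beta n\).

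For the tail bound I would rerun the dependency-graph argument of Lemma~\ref{lem:syndrome_lower_tail_dependency} rather than invoke its conclusion directly. The indicators \(S_v=\mathbf 1[v\in S]\) are now functions of the noise bits on faces incident to \(v\), each shifted by the constant \(a_v\). The dependency graph \(D\) is therefore unchanged, with maximum degree \(2d\), and it admits a proper coloring with \(q=2d+1\) classes. Within a class the \(S_v\) depend on disjoint sets of independent noise bits and hence are mutually independent.

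Let \(\mu:=\mathbb E|S|\ge\beta n\). The proof of Lemma~\ref{lem:syndrome_lower_tail_dependency} used only this independence structure, so it applies verbatim and gives \(\Pr(|S|<\mu/2)\le q\exp(-\mu/(8q^2))\). Since \(\beta n/2\le\mu/2\), the event \(\{|S|<\beta n/2\}\) is contained in \(\{|S|<\mu/2\}\). The exponent can then be weakened from \(\mu\) to \(\beta n\), which yields the stated bound. For constant \(d\) the bound is \(\exp(-\Omega(\beta n))\).

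The only delicate point is the one noted above: Lemma~\ref{lem:syndrome_lower_tail_dependency} was stated for pure iid noise, so I must check that the deterministic shifts \(a_v\) do not affect the coloring and independence argument. They do not, because XOR with a constant preserves independence and the \(S_v\) remain \(\{0,1\}\)-valued, so the Chernoff step still applies. The monotonicity of the bound in \(\mu\) is routine.
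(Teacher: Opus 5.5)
Your proposal is correct and follows the paper's proof essentially step for step. The paper also writes the syndrome bit at $v$ as the fixed bit $\mathbf 1[v\in S(E_0)]$ XOR the noise parity to get $\Pr[v\in S]\ge\beta$, then reruns the $(2d+1)$-coloring argument on the unchanged dependency graph and uses the inclusion $\{|S|<\beta n/2\}\subseteq\{|S|<\mu/2\}$ with $\mu\ge\beta n$.
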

\begin{proof}
For a vertex \(v\), let
\[
A_v:=\mathbf 1[v\in S(E_0)]
\]
be the fixed adversarial syndrome bit, and let
\[
Z_v:=\mathbf 1[v\in S(N_p)]
\]
be the random syndrome bit generated by the added noise. Since the syndrome map
is linear over \(\mathbb F_2\),
\[
\mathbf 1[v\in S(E_0\oplus N_p)]
=
A_v\oplus Z_v.
\]
By the same parity calculation used in Lemma~\ref{lem:prob_S_v1},
\[
\Pr(Z_v=1)=\alpha
=
\frac{1-(1-2p)^d}{2}.
\]
Therefore
\[
\Pr(A_v\oplus Z_v=1)
=
\begin{cases}
\alpha, & A_v=0,\\
1-\alpha, & A_v=1.
\end{cases}
\]
Thus, in either case,
\[
\Pr(A_v\oplus Z_v=1)\geq \beta.
\]
Summing over all vertices gives
\[
\mu:=\mathbb E[|S(E_0\oplus N_p)|]\geq \beta n.
\]

It remains only to apply the bounded-dependency lower-tail argument from
Lemma~\ref{lem:syndrome_lower_tail_dependency}.  Indeed, the random variable
\[
Y_v:=\mathbf 1[v\in S(E_0\oplus N_p)]
\]
depends only on the independent noise variables on faces incident to \(v\).
Thus the same dependency graph, in which two vertices are adjacent if they
share an incident face, has maximum degree at most \(2d\).  Within each color
class of a proper \((2d+1)\)-coloring of this dependency graph, the variables
\(Y_v\) depend on disjoint sets of face-noise variables and are therefore
independent.  Hence the proof of
Lemma~\ref{lem:syndrome_lower_tail_dependency} gives
\[
\Pr\left(|S(E_0\oplus N_p)|<\frac{\mu}{2}\right)
\leq
(2d+1)
\exp\left(
-\frac{\mu}{8(2d+1)^2}
\right).
\]
Since \(\mu\geq \beta n\), we have
\[
\left\{
|S(E_0\oplus N_p)|<\frac{\beta n}{2}
\right\}
\subseteq
\left\{
|S(E_0\oplus N_p)|<\frac{\mu}{2}
\right\},
\]
and therefore
\[
\Pr\left(
|S(E_0\oplus N_p)| < \frac{\beta n}{2}
\right)
\leq
(2d+1)
\exp\left(
-\frac{\beta n}{8(2d+1)^2}
\right).
\]
For constant \(d\), this is \(\exp(-\Omega(\beta n))\).
\end{proof}

\begin{theorem}[Smoothed near-optimality]
\label{thm:smoothed_near_optimality}
Let \(E_0\subseteq F\) be arbitrary, and let \(N_p\subseteq F\) include each
face independently with probability \(p\). Let
\[
S:=S(E_0\oplus N_p).
\]
Assume every vertex is incident to \(d\) faces. Let
\[
\alpha
=
\frac{1-(1-2p)^d}{2},
\qquad
\beta
=
\min\{\alpha,1-\alpha\}.
\]
Then the block decoder with wall spacing
\(
\tau
=
\Theta\left(\frac{1}{\beta\epsilon}\right)
\)
outputs a \((1+\epsilon)\)-approximation with probability at least
\[
1
-
(2d+1)
\exp\left(
-\frac{\beta n}{8(2d+1)^2}
\right).
\]
In particular, for constant \(p,d\) with \(\beta=\Theta(1)\), taking
\(
\tau=\Theta(\epsilon^{-1})
\)
gives a \((1+\epsilon)\)-approximation with probability
\[
1-\exp(-\Omega(n)).
\]
\end{theorem}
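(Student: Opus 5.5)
The plan is to repeat the high-probability argument from the constant-rate subsection, replacing the lower-tail input by Lemma~\ref{lem:smoothed_syndrome_density}. The key observation is that the deterministic guarantee does not depend on how the syndrome arose. Lemma~\ref{lem:deterministic_additive_guarantee} holds for \emph{every} valid syndrome, so it applies to $S=S(E_0\oplus N_p)$ for each realization of $N_p$. This gives $\SOL(S)\le \OPT(S)+Cn/\tau$ with the same absolute constant $C$ as before. The adversarial set $E_0$ therefore enters only through the distribution of $|S|$, and Lemma~\ref{lem:smoothed_syndrome_density} controls that distribution uniformly in $E_0$.

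First, I will dispose of the empty syndrome. On the event $|S|\ge \beta n/2$ with $n$ large, $S\neq\emptyset$. Since each face touches three vertices, $\OPT(S)\ge |S|/3$, so \eqref{eq:ratio_bound_by_syndrome_size} holds verbatim:
\[
\frac{\SOL}{\OPT}\le 1+A\frac{n}{\tau(1+|S|)},\qquad A=6C.
\]
Next, I will choose $\tau\ge 2A/(\beta\epsilon)$, so that $\tau=\Theta(1/(\beta\epsilon))$. Then $An/(\epsilon\tau)\le \beta n/2$. Hence if $\SOL/\OPT>1+\epsilon$, we must have $|S|<1+|S|<An/(\epsilon\tau)\le \beta n/2$. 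This yields
\[
\Pr\left(\frac{\SOL}{\OPT}>1+\epsilon\right)\le \Pr\left(|S|<\frac{\beta n}{2}\right),
\]
and the second part of Lemma~\ref{lem:smoothed_syndrome_density} bounds the right side by $(2d+1)\exp(-\beta n/(8(2d+1)^2))$. This is exactly the claimed success probability. For the case $S=\emptyset$, which lies inside the failure event anyway, the convention $\Gamma(\emptyset)=1$ and the empty output make it harmless.

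For the specialization, note that constant $p\in(0,1)$ and constant $d$ give $\alpha\in(0,1)$ constant. Hence $\beta=\Theta(1)$, so $\tau=\Theta(\epsilon^{-1})$ and the failure probability is $\exp(-\Omega(n))$. The running time $n2^{O(\epsilon^{-1})}$ carries over from the time-complexity subsection unchanged, since it does not depend on $S$.

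I expect no real obstacle, as the statement is essentially a corollary. The only point needing care is confirming that the additive constant $C$ and the wall geometry assumptions are independent of $E_0$. They are, because the decoder and the boundary repair of Lemma~\ref{lem:boundary_repair} depend only on the lattice and $\tau$. A second point is that the tail bound is stated in terms of $\beta n$ rather than $\mu$, which avoids needing to know $\mathbb E|S|$ exactly under the adversarial shift.
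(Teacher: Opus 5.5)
Your proposal is correct and follows essentially the same route as the paper: apply the deterministic additive bound to each realization of $S$, use $\OPT(S)\ge |S|/3$, and invoke Lemma~\ref{lem:smoothed_syndrome_density} to get $|S|\ge \beta n/2$ with the stated probability. The only difference is bookkeeping. The paper plugs $|S|\ge\beta n/2$ directly into $1+3Cn/(\tau|S|)$ and takes $\tau\ge 6C/(\beta\epsilon)$. You go through the $(1+|S|)$ form and get $\tau\ge 12C/(\beta\epsilon)$, which is the same $\Theta(1/(\beta\epsilon))$.
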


\begin{proof}
The deterministic analysis of the block decoder gives
\[
\SOL
\leq
\OPT + C\frac{n}{\tau}
\]
for some lattice-dependent constant \(C>0\). Therefore, whenever
\(\OPT>0\),
\[
\frac{\SOL}{\OPT}
\leq
1+C\frac{n}{\tau\OPT}.
\]
Since every face is incident to three vertices, any feasible correction for
syndrome \(S\) has size at least \(|S|/3\). Thus
\[
\OPT\geq \frac{|S|}{3}.
\]
Consequently,
\[
\frac{\SOL}{\OPT}
\leq
1+3C\frac{n}{\tau |S|}.
\]

By Lemma~\ref{lem:smoothed_syndrome_density}, with probability at least
\[
1
-
(2d+1)
\exp\left(
-\frac{\beta n}{8(2d+1)^2}
\right),
\]
we have
\[
|S|\geq \frac{\beta n}{2}.
\]
On this event,
\[
\frac{\SOL}{\OPT}
\leq
1+3C\frac{n}{\tau(\beta n/2)}
=
1+\frac{6C}{\beta\tau}.
\]
Choosing
\[
\tau\geq \frac{6C}{\beta\epsilon}
\]
gives
\[
\frac{\SOL}{\OPT}\leq 1+\epsilon.
\]
This proves the claim.
\end{proof}

\begin{remark}
Taking \(E_0=\emptyset\) recovers the i.i.d. noise case. Thus the smoothed
result extends the average-case guarantee: even if an adversary fixes an
arbitrary initial error pattern, adding independent face noise of constant
density makes the block decoder near-optimal with high probability.
\end{remark}

\begin{remark}[Vanishing perturbation density]
For small \(p\) and constant \(d\), we have
\[
\beta
=
dp+O(p^2),
\]
provided \(p\) is in the regime where \(\alpha\leq 1/2\). Hence the same proof
gives a \((1+\epsilon)\)-approximation by taking
\[
\tau
=
\Theta\left(\frac{1}{\epsilon d p}\right),
\]
with failure probability
\[
\exp(-\Omega(np)).
\]
Thus this smoothed guarantee remains meaningful for \(p=p(n)\to 0\) whenever
\[
np\to\infty,
\]
although the running time worsens as \(p\) decreases.
\end{remark}

\section{Sparse Regime: Inflated Active Components}
\label{sec:sparse-inflated-components}

We now describe a sparse-noise regime in which exact decoding reduces, with
high probability, to independent exact decoding on separated local regions.
Unlike the constant-rate and smoothed analyses, we do not use the boundary
repair step here.  Instead, we exploit the fact that, when the noise rate is
sufficiently small, the syndrome is supported on well-separated active regions.

Throughout this section, the error is generated by iid face noise with
probability \(p=p(n)\).  We write \(E\subseteq F_n\) for the random generating
error and
\[
S:=S(E)
\]
for its syndrome.  Connected components of face sets are taken with respect to
the adjacency relation in which two faces are adjacent if they share a vertex.

\subsection{Block geometry}

Let \(\{\mathcal L_n\}\) be a sequence of finite triangular cell complexes,
where \(\mathcal L_n\) has vertex set \(V_n\), face set \(F_n\), and
\[
|V_n|=n.
\]
Assume that \(\mathcal L_n\) is equipped with a block decomposition
\(\mathcal B_n\) at scale \(\tau=\tau(n)\).  Let \(H_{\mathcal B}\) be the
block-adjacency graph: its vertices are the blocks in \(\mathcal B_n\), and two
blocks are adjacent if they are adjacent in the cell complex.

For blocks \(B,B'\in\mathcal B_n\), let
\[
\dist_{H_{\mathcal B}}(B,B')
\]
denote their graph distance in \(H_{\mathcal B}\).  For block sets
\(\mathcal X,\mathcal Y\subseteq\mathcal B_n\), define
\[
\dist_{H_{\mathcal B}}(\mathcal X,\mathcal Y)
:=
\min\{
\dist_{H_{\mathcal B}}(B,B') : B\in\mathcal X,\ B'\in\mathcal Y
\}.
\]
For a block \(B\) and a block set \(K\), we write
\[
\dist_{H_{\mathcal B}}(B,K)
:=
\min_{B'\in K}\dist_{H_{\mathcal B}}(B,B').
\]

For a face \(f\in F_n\), let \(\mathcal B(f)\) denote the set of blocks
incident to the vertices of \(f\).  For a face set \(X\subseteq F_n\), define
\[
\mathcal B(X):=\bigcup_{f\in X}\mathcal B(f).
\]
For a block set \(\mathcal X\subseteq\mathcal B_n\), define
\[
F(\mathcal X)
:=
\{f\in F_n:\mathcal B(f)\subseteq\mathcal X\}
\]
and
\[
V(\mathcal X)
:=
\{v\in V_n:\exists f\in F(\mathcal X)\text{ with }v\in f\}.
\]
Thus, if \(E'\subseteq F(\mathcal X)\), then
\[
S(E')\subseteq V(\mathcal X).
\]
This convention ensures that local corrections supported in a local face
region cannot create syndrome outside the corresponding local vertex region.

For a set of blocks \(K\subseteq\mathcal B_n\) and an integer \(a\geq 0\), let
\[
\mathcal N_a(K)
:=
\{B\in\mathcal B_n:\dist_{H_{\mathcal B}}(B,K)\leq a\}.
\]
We write
\[
F_a(K):=F(\mathcal N_a(K)),
\qquad
V_a(K):=V(\mathcal N_a(K)).
\]

\begin{definition}[Logarithmically two-dimensional block family]
\label{def:logarithmically-two-dimensional-block-family}
We say that \(\{\mathcal L_n\}\) is logarithmically two-dimensional with
respect to the block decompositions \(\mathcal B_n\) if there is a scale
\[
L_n=\Omega(\log n)
\]
and constants
\[
C_{\rm deg},C_{\rm loc},C_{\rm size},C_{\rm blk},C_{\rm face},
C_{\rm gr},C_{\rm tw},\lambda_{\rm ann},\lambda_{\rm multi}>0
\]
such that the following properties hold for all sufficiently large \(n\) and
all radii \(1\leq s\leq L_n\).

\begin{enumerate}[label=\textup{(\arabic*)}]
\item \textbf{Bounded local geometry.}
Every vertex is incident to at most \(C_{\rm deg}\) faces, every face has
constant size, and the block graph \(H_{\mathcal B}\) has maximum degree at
most \(C_{\rm deg}\).  Moreover, for every face \(f\),
\[
\operatorname{diam}_{H_{\mathcal B}}(\mathcal B(f))\leq C_{\rm loc}.
\]
Also, if two block sets \(\mathcal X,\mathcal Y\subseteq\mathcal B_n\) satisfy
\[
V(\mathcal X)\cap V(\mathcal Y)\neq\emptyset,
\]
then
\[
\dist_{H_{\mathcal B}}(\mathcal X,\mathcal Y)\leq C_{\rm loc}.
\]

\item \textbf{Block size and total size.}
Each block contains at most \(C_{\rm size}\tau^2\) faces and at most
\(C_{\rm size}\tau^2\) vertices.  The total number of blocks and faces satisfies
\[
|\mathcal B_n|\leq C_{\rm blk}\frac{n}{\tau^2}
\]
and
\[
|F_n|\leq C_{\rm face}n.
\]

\item \textbf{Two-dimensional growth.}
For every block set \(K\subseteq\mathcal B_n\),
\[
|\mathcal N_s(K)|\leq C_{\rm gr}|K|(s+1)^2.
\]
In particular, for every block \(B\in\mathcal B_n\),
\[
|\mathcal N_s(B)|\leq C_{\rm gr}(s+1)^2.
\]

\item \textbf{Local treewidth.}
For every block set \(K\subseteq\mathcal B_n\) with \(|K|=k\), the face-vertex
incidence graph of the subcomplex induced by \(\mathcal N_s(K)\) has treewidth
at most
\[
C_{\rm tw}\tau s\sqrt{\max\{k,1\}}.
\]

\item \textbf{Annulus crossing.}
For every block set \(K\subseteq\mathcal B_n\), every connected face set
\(X\subseteq F_n\), and every \(1\leq s\leq L_n/2\), if
\[
X\cap F_s(K)\neq\emptyset
\qquad\text{and}\qquad
X\cap \bigl(F_{2s}(K)\bigr)^c\neq\emptyset,
\]
then
\[
|X|\geq \lambda_{\rm ann}\tau s.
\]
Equivalently, any connected face set that starts in the inner region \(F_s(K)\)
and reaches outside the larger region \(F_{2s}(K)\) must cross the annulus
\(F_{2s}(K)\setminus F_s(K)\), and such a crossing has size
\(\Omega(\tau s)\).

\item \textbf{Multi-region crossing.}
Let \(K_1,\dots,K_\ell\subseteq\mathcal B_n\) be block sets satisfying
\[
\dist_{H_{\mathcal B}}(K_i,K_j)>6s
\]
for every \(i\neq j\).  If a connected face set \(X\subseteq F_n\) intersects
each of
\[
F_s(K_1),F_s(K_2),\dots,F_s(K_\ell),
\]
then
\[
|X|\geq \lambda_{\rm multi}\tau s(\ell-1).
\]
\end{enumerate}
\end{definition}

Let
\[
\lambda_0:=\min\{\lambda_{\rm ann},\lambda_{\rm multi}\}.
\]

\begin{remark}[Planar patches and tori]
The definition above is local.  It does not require the underlying complex to
be planar or to have boundary.  A triangular lattice patch in the plane
satisfies the definition.  A periodic triangular lattice on a torus also
satisfies it, provided the injectivity radius of the torus is
\(\Omega(\log n)\) in block distance.  For the usual \(L\times L\) toric
lattice, the injectivity radius is \(\Theta(L)=\Theta(\sqrt n)\), while the
sparse decoder below only uses radii \(r=\Theta(\log n)\).
\end{remark}

\subsection{The inflated active-component decoder}

A block \(B\in\mathcal B_n\) is called \emph{active} if
\[
S\cap V(\{B\})\neq\emptyset.
\]
Let \(\mathcal A\subseteq\mathcal B_n\) be the set of active blocks.

For an integer \(r\geq 1\), define the coarse active graph \(H^{(r)}\) as
follows.  The vertices of \(H^{(r)}\) are the active blocks.  Two active blocks
\(B,B'\in\mathcal A\) are adjacent if
\[
\dist_{H_{\mathcal B}}(B,B')\leq 6r.
\]
Let \(\mathcal K_r\) denote the set of connected components of \(H^{(r)}\).

For each component \(K\in\mathcal K_r\), define the local target syndrome
\[
S_K:=S\cap V_{2r}(K).
\]
The decoder solves the exact local problem
\[
\min\{|E_K|:E_K\subseteq F_{2r}(K),\ S(E_K)=S_K\}.
\]
The final output is
\[
E_{\mathrm{out}}
:=
\bigoplus_{K\in\mathcal K_r} E_K.
\]

The rest of the section proves that, in the sparse regime, these local problems
are feasible, independent, and sufficient for global optimality.

\subsection{The sparse good event}

We now package the technical sparse-region properties into one event.  

\begin{definition}[Sparse good event]
\label{def:sparse-good-event}
Fix parameters \(r,t,R\geq 1\) with \(2r\leq L_n\).  The sparse good event
\(\mathcal G(r,t,R)\) is the event that the following properties all hold.

\begin{enumerate}[label=\textup{(\arabic*)}]
\item \textbf{Small active components.}
Every connected component \(K\in\mathcal K_r\) contains at most \(t\) active
blocks:
\[
|K|\leq t.
\]

\item \textbf{Separated inflated regions and syndrome partition.}
For distinct components \(K,K'\in\mathcal K_r\),
\[
F_{2r}(K)\cap F_{2r}(K')=\emptyset
\qquad\text{and}\qquad
V_{2r}(K)\cap V_{2r}(K')=\emptyset.
\]
Moreover, the local syndromes \(S_K=S\cap V_{2r}(K)\) form a disjoint partition
of the global syndrome:
\[
S=\bigsqcup_{K\in\mathcal K_r} S_K.
\]

\item \textbf{Small local witnesses.}
For every \(K\in\mathcal K_r\), there exists a face set
\[
L_K\subseteq F_{2r}(K)
\]
such that
\[
S(L_K)=S_K
\qquad\text{and}\qquad
|L_K|\leq R.
\]

\item \textbf{No escape and no merging for optima.}
Every global minimum-size correction \(E^\star\) for \(S\) decomposes over the
inflated regions.  That is, there are face sets
\[
E^\star_K\subseteq F_{2r}(K)
\]
such that
\[
E^\star=\bigoplus_{K\in\mathcal K_r}E^\star_K
\qquad\text{and}\qquad
S(E^\star_K)=S_K
\]
for every \(K\in\mathcal K_r\).
\end{enumerate}
\end{definition}

We next prove that good events occur with high probability.

\begin{lemma}[The sparse good event holds with high probability]
\label{lem:sparse-good-event}
Let \(\{\mathcal L_n\}\) be a logarithmically two-dimensional block family.
There exist constants
\[
C_*,C_{\rm long},c_*>0,
\]
depending only on the block geometry, such that the following holds.  Let
\[
\eta:=C_*r^2\tau^2p.
\]
If
\[
n\eta^{c_*t}=o(1),
\]
\[
n\eta^{c_*R}=o(1),
\]
\[
n(C_{\rm long}\tau^2p)^{c_*r}=o(1),
\]
and
\[
2R<\lambda_0\tau r,
\]
then
\[
\Pr(\mathcal G(r,t,R))\to 1.
\]
\end{lemma}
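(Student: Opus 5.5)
The plan is to dominate every property of \(\mathcal G(r,t,R)\) by the absence of a few ``large error animal'' events, each bounded by a Peierls-type union bound over connected block configurations. Since \(S=S(E)\subseteq\bigcup_{f\in E}f\), every active block lies within block distance \(C_{\rm loc}\) of some block in \(\mathcal B(f)\) with \(f\in E\). By the two-dimensional growth property, a fixed block has at most \(C_{\rm gr}(6r+1)^2\) blocks within distance \(6r\). Hence the \(6r\)-power of \(H_{\mathcal B}\) has maximum degree \(O(r^2)\), and a fixed block lies in at most \((Cr^2)^m\) connected \(m\)-block subsets of this power graph. Each block \(B\) has only \(O(\tau^2)\) faces \(f\) with \(B\in\mathcal B(f)\). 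Passing to a sub-collection of the \(m\) blocks that are pairwise far enough apart that the witnessing faces are distinct costs a constant factor \(c_*\) in the exponent. So the probability that a given configuration is ``all error-active'' is at most \((C\tau^2p)^{c_*m}\), and summing over the \(\le C_{\rm blk}n/\tau^2\) root blocks gives a bound of the form \(n(C_*r^2\tau^2p)^{c_*m}=n\eta^{c_*m}\).

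Taking \(m=t\) gives property (1), using the hypothesis \(n\eta^{c_*t}=o(1)\). For the ``long cluster'' event I use the same count, but in \(H_{\mathcal B}\) itself (degree \(\le C_{\rm deg}\)). A connected component of \(E\), under the vertex-sharing adjacency, whose block set has diameter \(\ge r\) yields a path of \(\Omega(r)\) blocks each meeting an error face. This has probability \(\le n(C_{\rm long}\tau^2p)^{c_*r}=o(1)\). Finally, the event that some region \(F_{2r}(K)\) contains more than \(R\) error faces is handled by the \(\eta\)-count with \(m=R\), which is \(o(1)\) by \(n\eta^{c_*R}=o(1)\). I work on the intersection of the three complements.

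Property (2) is then deterministic. Distinct components \(K,K'\) satisfy \(\dist_{H_{\mathcal B}}(K,K')>6r\). Using \(\operatorname{diam}\mathcal B(f)\le C_{\rm loc}\) and the vertex-overlap clause of bounded local geometry, \(\mathcal N_{2r}(K)\) and \(\mathcal N_{2r}(K')\) are at distance \(>2r\ge C_{\rm loc}\) for large \(n\), so the face and vertex sets \(F_{2r}\) and \(V_{2r}\) are disjoint. Every syndrome vertex lies in \(V(\{B\})\) for an active \(B\in K\subseteq\mathcal N_{2r}(K)\), which gives the partition \(S=\bigsqcup_K S_K\). For property (3), I let \(L_K\) be the union of those connected components of \(E\) whose syndrome meets \(S_K\). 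Each such component has block diameter \(<r\) and touches an active block of \(K\). Hence it lies in \(F_{2r}(K)\), and it cannot reach another region. The components with empty syndrome contribute nothing to \(S\), so \(S(L_K)=S_K\), and \(|L_K|\le R\) holds by the third excluded event.

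Property (4) is the main obstacle. The plan is an exchange argument. Let \(E^\star\) be optimal; then \(|E^\star|\le\sum_K|L_K|\). Every connected component \(Y\) of \(E^\star\) has \(S(Y)\neq\emptyset\), since otherwise deleting \(Y\) would lower the weight, and its syndrome vertices lie in active blocks. Suppose some \(Y\) leaves \(F_{2r}(K)\) after meeting \(F_r(K)\), or meets \(F_r\) of \(\ell\ge2\) different components. Then annulus crossing, respectively multi-region crossing with separation \(>6r\), gives \(|Y|\ge\lambda_0\tau r>2R\). I first try the following replacement. Take the union \(\mathcal Y\) of all such bad components and the set \(\mathcal K'\) of regions they touch. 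Replace \(\mathcal Y\) together with all of \(E^\star\cap\bigcup_{K\in\mathcal K'}F_{2r}(K)\) by \(\bigoplus_{K\in\mathcal K'}L_K\). This preserves the syndrome, by the partition in (2), and costs at most \(R|\mathcal K'|\). Meanwhile the multi-crossing bound gives \(|\mathcal Y|\ge\lambda_0\tau r(|\mathcal K'|-1)\) plus an annulus term. Together with \(2R<\lambda_0\tau r\), this yields a strict decrease, contradicting optimality.

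The delicate point is a single bad component touching only one region. It must leave that region, so annulus crossing gives \(>2R\ge R\), and the same replacement still strictly decreases the weight. Once no bad component exists, every component lies in a single \(F_{2r}(K)\), and grouping the components by region gives the required \(E^\star_K\) with \(S(E^\star_K)=S_K\).
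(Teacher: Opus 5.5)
\textbf{The gap: your Peierls count for property (1).} As written, this count does not give the stated bound. You pay entropy \((Cr^2)^m\) for all \(m\) blocks of a connected configuration in the \(6r\)-power graph. After thinning, you collect only \((C\tau^2p)^{c_*m}\) in probability. The union bound therefore gives \(n(Cr^2)^m(C\tau^2p)^{c_*m}=n(C'r^{2/c_*}\tau^2p)^{c_*m}\), not \(n(C_*r^2\tau^2p)^{c_*m}\).

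In general the thinning constant satisfies \(c_*<1\), because one error face can activate several neighbouring blocks. So you lose a power of \(r\), and this matters. With \(\tau=\Theta(1)\), \(r=\Theta(\log n)\) and \(p=o(1/\log^2 n)\), as in the corollary, \(r^{2/c_*}p\) need not tend to zero.

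The repair is to take the union bound over the thinned sub-collection itself. Choose a maximal set of configuration blocks at pairwise distance \(>2a\), where \(a=O(1)\) is the witness range. This set is connected in the \((6r+4a)\)-power graph, so its entropy is \((Cr^2)^{c_*m}\), which matches the probability. Equivalently, as the paper does, count \(O(r)\)-coarsely connected sets of error faces directly, with \(O(r^2\tau^2)\) choices and a factor \(p\) per face.

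The face-based count is also what your third event needs. The event ``\(F_{2r}(K)\) contains more than \(R\) error faces'' is not controlled by a block count, since many error faces can sit in one block. However, the error faces in \(F_{2r}(K)\) include witnesses for every active block of \(K\), so they form an \(O(r)\)-coarsely connected face set. Counting such sets gives a bound of the form \(n\eta^{c_*R}\). Your long-cluster bound is fine as stated, since there the entropy per step is \(O(1)\) and is absorbed into \(C_{\rm long}\).

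\textbf{The rest of the argument.} Apart from this, your route is the paper's: the same three rare events, the same deterministic separation, and an exchange argument via annulus and multi-region crossing with \(2R<\lambda_0\tau r\). Your exchange is more careful than the paper's. The paper compares a single bad component with \(\ell R\). It leaves implicit how other components realizing parts of the same \(S_{K_i}\) are removed, and these may themselves be bad and reach further regions. Your simultaneous replacement of all bad components, together with all of \(E^\star\) inside the touched regions, by \(\bigoplus_{K\in\mathcal K'}L_K\) preserves the syndrome exactly.

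Two small fixes are needed there. First, let \(\mathcal K'\) be the set of \(K\) whose inner region \(F_r(K)\) some bad component meets. Second, multi-region crossing applies only to connected sets, so do not bound the union \(\mathcal Y\) by \(\lambda_0\tau r(|\mathcal K'|-1)\). Instead, note that each bad component \(Y_j\) meeting \(\ell_j\) inner regions satisfies \(|Y_j|>R\ell_j\). This follows from multi-region crossing and \(\ell_j\le 2(\ell_j-1)\) when \(\ell_j\ge 2\), and from annulus crossing when \(\ell_j=1\). Summing gives \(|\mathcal Y|>R\sum_j\ell_j\ge R|\mathcal K'|\).
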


\begin{proof}

First, a lattice-animal counting argument in the block graph shows that large
coarse active components are unlikely.  Indeed, by two-dimensional growth, an
\(r\)-neighborhood of a block contains \(O(r^2)\) blocks and hence
\(O(r^2\tau^2)\) faces.  Thus the probability of finding a coarse connected
active structure of size \(s\) is bounded by
\[
n\left(C_*r^2\tau^2p\right)^{c_*s}
=
n\eta^{c_*s},
\]
after adjusting the constants.  Taking \(s=t\) and using
\(n\eta^{c_*t}=o(1)\), every component of \(H^{(r)}\) has at most \(t\) active
blocks with high probability.

Second, the same counting argument, applied to connected sets of error faces,
shows that the total generating error associated with any one active component
has size at most \(R\) with high probability.  More precisely, if an inflated
active component required a generating witness of size at least \(R\), then the
random error would contain an \(r\)-coarsely connected set of at least
\(c_*R\) error faces.  The probability of this event is at most
\[
n\eta^{c_*R}=o(1).
\]
Consequently, with high probability, for every \(K\in\mathcal K_r\) there is a
face set \(L_K\subseteq F_{2r}(K)\) satisfying
\[
S(L_K)=S_K
\qquad\text{and}\qquad
|L_K|\leq R.
\]

Third, another standard lattice-animal bound shows that no connected component
of the generating error has block diameter at least \(r/2\), with probability
at least
\[
1-n(C_{\rm long}\tau^2p)^{c_*r}.
\]
The assumption
\[
n(C_{\rm long}\tau^2p)^{c_*r}=o(1)
\]
therefore implies that every nonzero-syndrome component of the generating error
is captured inside \(F_r(K)\) for a unique active component \(K\).

The separation statement is deterministic once \(r\) is sufficiently large.
If \(K\neq K'\) are distinct connected components of \(H^{(r)}\), then every
active block in \(K\) is at block distance more than \(6r\) from every active
block in \(K'\).  Hence their \(2r\)-neighborhoods are separated by a positive
block-distance margin.  The bounded-locality assumptions then imply
\[
F_{2r}(K)\cap F_{2r}(K')=\emptyset
\qquad\text{and}\qquad
V_{2r}(K)\cap V_{2r}(K')=\emptyset.
\]
Moreover, every syndrome vertex lies in an active block, so the sets
\[
S_K=S\cap V_{2r}(K)
\]
form a disjoint partition of \(S\).

It remains to justify the no-escape and no-merging property.  Condition on the
events proved above, and let \(E^\star\) be a minimum-size correction for \(S\).
Decompose \(E^\star\) into connected face components.

First suppose a connected component \(C\) of \(E^\star\) touches the syndrome
of two or more distinct active components, say
\[
K_1,\dots,K_\ell,
\qquad \ell\geq 2.
\]
Because \(C\) realizes syndrome vertices in each \(S_{K_i}\), bounded locality
implies that \(C\) intersects each inner region \(F_r(K_i)\).  Since distinct
active components are separated by distance more than \(6r\), the
multi-region crossing property gives
\[
|C|
\geq
\lambda_{\rm multi}\tau r(\ell-1).
\]
On the other hand, the local witnesses \(L_{K_1},\dots,L_{K_\ell}\) have total
weight at most \(\ell R\).  Since \(\ell\leq 2(\ell-1)\) for \(\ell\geq 2\),
the condition
\[
2R<\lambda_0\tau r
\]
implies
\[
\ell R
<
\lambda_0\tau r(\ell-1)
\leq
\lambda_{\rm multi}\tau r(\ell-1).
\]
Thus replacing the part of \(E^\star\) responsible for those local syndromes by
the corresponding local witnesses would strictly decrease the weight, while
preserving the global syndrome.  This contradicts the optimality of
\(E^\star\).  Therefore no connected component of a global optimum can merge
two distinct inflated active regions.

Next suppose a connected component \(C\) of \(E^\star\) contributes to a single
local syndrome \(S_K\) but is not contained in \(F_{2r}(K)\).  Since it
contributes to \(S_K\), bounded locality implies that
\[
C\cap F_r(K)\neq\emptyset.
\]
Since \(C\not\subseteq F_{2r}(K)\), the annulus-crossing property gives
\[
|C|\geq \lambda_{\rm ann}\tau r\geq \lambda_0\tau r.
\]
But \(L_K\) realizes the same local syndrome \(S_K\) with
\[
|L_K|\leq R<\lambda_0\tau r.
\]
Replacing the escaping part by \(L_K\) again strictly decreases the weight while
preserving the syndrome, contradicting optimality.

Hence every global optimum decomposes into pieces supported inside the
disjoint regions \(F_{2r}(K)\), and each such piece realizes the corresponding
local syndrome \(S_K\).  Therefore all properties in
Definition~\ref{def:sparse-good-event} hold with high probability.
\end{proof}

\subsection{Exactness on the good event}

\begin{lemma}[Exactness conditioned on the sparse good event]
\label{lem:exactness-on-good-event}
If the sparse good event \(\mathcal G(r,t,R)\) holds, then the inflated
active-component decoder outputs a globally minimum-size correction.
\end{lemma}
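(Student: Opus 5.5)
We prove the two halves of optimality separately: first that \(E_{\mathrm{out}}\) is a valid correction for \(S\), then that its size is at most \(\OPT(S)\). Both follow from properties (2)--(4) of Definition~\ref{def:sparse-good-event}. Property (3) is used only to guarantee that each local problem has a feasible solution, so every \(E_K\) is well defined.

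\emph{Feasibility.} By property (3), the local problem for \(K\) has the feasible point \(L_K\subseteq F_{2r}(K)\), so the decoder returns some \(E_K\subseteq F_{2r}(K)\) with \(S(E_K)=S_K\). By linearity of the syndrome map over \(\mathbb F_2\),
\[
S(E_{\mathrm{out}})=\bigoplus_{K\in\mathcal K_r} S(E_K)=\bigoplus_{K\in\mathcal K_r} S_K .
\]
Property (2) says the \(S_K\) are pairwise disjoint and their union is \(S\). Their symmetric difference is therefore their disjoint union, so \(S(E_{\mathrm{out}})=S\).

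\emph{Weight.} The face sets \(F_{2r}(K)\) are pairwise disjoint by property (2). Hence the sets \(E_K\) are pairwise disjoint, and
\[
|E_{\mathrm{out}}|=\sum_{K\in\mathcal K_r}|E_K|.
\]
Now fix any global minimum-size correction \(E^\star\). Property (4) decomposes it as \(E^\star=\bigoplus_K E^\star_K\) with \(E^\star_K\subseteq F_{2r}(K)\) and \(S(E^\star_K)=S_K\). Each \(E^\star_K\) is therefore feasible for the local problem of \(K\), and local minimality gives \(|E_K|\le |E^\star_K|\). The pieces \(E^\star_K\) are again pairwise disjoint, since they lie in disjoint regions, so \(\OPT(S)=|E^\star|=\sum_K|E^\star_K|\). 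Chaining these,
\[
|E_{\mathrm{out}}|=\sum_{K\in\mathcal K_r}|E_K|\le\sum_{K\in\mathcal K_r}|E^\star_K|=\OPT(S).
\]
Combined with feasibility, this gives \(|E_{\mathrm{out}}|=\OPT(S)\), so \(E_{\mathrm{out}}\) is a globally minimum-size correction.

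The only point that needs care is that the symmetric differences can be replaced by disjoint unions, both for the syndromes and for the face sets. That is exactly what the disjointness clauses in property (2) provide. One should also note that the local problem constrains the syndrome to equal \(S_K\) exactly, not merely on part of \(V_{2r}(K)\). Since \(S(E')\subseteq V_{2r}(K)\) for any \(E'\subseteq F_{2r}(K)\), the conditions \(S(E^\star_K)=S_K\) and \(S(L_K)=S_K\) are precisely the right notion of local feasibility. Beyond these points there is no real obstacle: all the probabilistic difficulty has already been absorbed into Lemma~\ref{lem:sparse-good-event}.
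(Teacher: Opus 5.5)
Your proposal is correct and follows essentially the same route as the paper. Each local problem is feasible by the witnesses \(L_K\); \(E_{\mathrm{out}}\) realizes \(S\) because the \(S_K\) partition \(S\); and \(|E_{\mathrm{out}}|\le\OPT(S)\) follows by comparing each \(E_K\) with the piece \(E^\star_K\) given by the no-escape/no-merging property. Your explicit remark that the pieces \(E^\star_K\) are pairwise disjoint, so that \(|E^\star|=\sum_K|E^\star_K|\), spells out a step the paper uses without stating it.
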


\begin{proof}
For each \(K\in\mathcal K_r\), the good event provides a feasible local witness
\(L_K\subseteq F_{2r}(K)\) with syndrome \(S_K\).  Hence the local optimization
problem
\[
\min\{|E_K|:E_K\subseteq F_{2r}(K),\ S(E_K)=S_K\}
\]
is feasible.

Let \(E_K\) be an optimal local solution, and let
\[
E_{\mathrm{out}}:=\bigoplus_{K\in\mathcal K_r}E_K.
\]
Because the regions \(F_{2r}(K)\) are pairwise disjoint, the local outputs have
disjoint supports, so
\[
|E_{\mathrm{out}}|
=
\sum_{K\in\mathcal K_r}|E_K|.
\]
Because the local syndromes \(S_K\) form a disjoint partition of \(S\),
\[
S(E_{\mathrm{out}})
=
\bigoplus_{K\in\mathcal K_r}S(E_K)
=
\bigoplus_{K\in\mathcal K_r}S_K
=
S.
\]
Thus \(E_{\mathrm{out}}\) is feasible for the global syndrome.

Now let \(E^\star\) be a global minimum-size correction for \(S\).  By the
no-escape and no-merging property of the good event, it decomposes as
\[
E^\star=\bigoplus_{K\in\mathcal K_r}E^\star_K,
\]
where
\[
E^\star_K\subseteq F_{2r}(K)
\qquad\text{and}\qquad
S(E^\star_K)=S_K.
\]
Since \(E_K\) is a minimum-size realization of \(S_K\) inside \(F_{2r}(K)\),
\[
|E_K|\leq |E^\star_K|
\]
for every \(K\).  Therefore
\[
|E_{\mathrm{out}}|
=
\sum_K |E_K|
\leq
\sum_K |E^\star_K|
=
|E^\star|
=
\OPT(S).
\]
Since \(E_{\mathrm{out}}\) is feasible, we also have
\[
|E_{\mathrm{out}}|\geq \OPT(S).
\]
Hence
\[
|E_{\mathrm{out}}|=\OPT(S),
\]
as claimed.
\end{proof}

\subsection{Parameterized sparse exact decoding}

\begin{theorem}[Parameterized sparse exact decoding]
\label{thm:parameterized-sparse-exact-decoding}
Let \(\{\mathcal L_n\}\) be a logarithmically two-dimensional block family in
the sense of
Definition~\ref{def:logarithmically-two-dimensional-block-family}.  Assume iid
face noise with probability \(p=p(n)\), and let \(S=S(E)\) be the resulting
syndrome.

Let \(r,t,R\geq 1\) be parameters with
\[
2r\leq L_n,
\]
and define
\[
\eta:=C_*r^2\tau^2p,
\]
where \(C_*>0\) is the constant from
Lemma~\ref{lem:sparse-good-event}.  Suppose that
\[
n\eta^{c_*t}=o(1),
\]
\[
n\eta^{c_*R}=o(1),
\]
\[
n(C_{\rm long}\tau^2p)^{c_*r}=o(1),
\]
and
\[
2R<\lambda_0\tau r.
\]
Then, with high probability, the inflated active-component decoder produces a
globally minimum-size realization of \(S\).  That is,
\[
S(E_{\mathrm{out}})=S
\qquad\text{and}\qquad
|E_{\mathrm{out}}|=\OPT(S).
\]
Moreover, if every connected component of \(H^{(r)}\) has at most \(t\) active
blocks, the total running time is
\[
n\,2^{O(\tau r\sqrt t)}.
\]
\end{theorem}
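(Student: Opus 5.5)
The proof splits into two parts: correctness with high probability, which is almost entirely a combination of earlier lemmas, and a running-time bound, which uses the local treewidth property.

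\textbf{Correctness.} The four hypotheses of the theorem are exactly those of Lemma~\ref{lem:sparse-good-event}, so $\Pr(\mathcal G(r,t,R))\to 1$. On this event, Lemma~\ref{lem:exactness-on-good-event} shows that the inflated active-component decoder returns $E_{\mathrm{out}}$ with $S(E_{\mathrm{out}})=S$ and $|E_{\mathrm{out}}|=\OPT(S)$. Hence the correctness claim holds with probability at least $\Pr(\mathcal G(r,t,R))=1-o(1)$, and nothing beyond these two lemmas is needed.

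\textbf{Running time.} I would proceed in four steps.
\begin{enumerate}
\item Compute the syndrome, the active blocks $\mathcal A$, and the components $\mathcal K_r$ of the coarse graph $H^{(r)}$. For each active block, run a BFS of radius $6r$ in $H_{\mathcal B}$. By two-dimensional growth this touches $O(r^2)$ blocks, so the step costs $O(n+|\mathcal A|r^2)=O(n\log^2 n)$ time.
\item Build each region. For $K\in\mathcal K_r$, construct $\mathcal N_{2r}(K)$, $F_{2r}(K)$ and $V_{2r}(K)$. Property~(3) of Definition~\ref{def:logarithmically-two-dimensional-block-family} gives $|\mathcal N_{2r}(K)|\le C_{\rm gr}|K|(2r+1)^2$. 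Property~(2) then bounds the number of faces by $O(|K|r^2\tau^2)$.
\item Solve each local problem by treewidth dynamic programming. The task is an exact minimum-weight $\mathbb F_2$ constraint problem on the face-vertex incidence graph of $\mathcal N_{2r}(K)$. Property~(4) with $s=2r$ and $|K|\le t$ bounds the treewidth by $w=O(\tau r\sqrt t)$. I would use the standard tree-decomposition dynamic program, a generalisation of the frontier dynamic program of Section~\ref{sec:cont_error}. Its states record the selection bits of the faces in a bag together with the accumulated parity of the vertices in that bag. This gives $2^{O(w)}$ states per bag and $\mathrm{poly}(\text{size})\cdot 2^{O(w)}$ time per component.
\item Sum over components. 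The regions are pairwise disjoint on the good event, so the total size is $O(n)$ and the total time is $n\,2^{O(\tau r\sqrt t)}$.
\end{enumerate}

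\textbf{Main obstacle.} The delicate step is obtaining the tree decomposition itself. Computing an optimal decomposition is hard, but a constant-factor approximation such as Bodlaender's $2^{O(w)}$-time $5$-approximation, or Korhonen's $2$-approximation, suffices. The approximation factor only changes the constant in the exponent, so the $2^{O(\tau r\sqrt t)}$ bound is unaffected. For planar patches, an explicit separator-based decomposition of the grid could be used instead.

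One more point needs care: the local constraint is $S(E_K)=S_K$ on all of $V_{2r}(K)$, not only on interior vertices. Every face of $F_{2r}(K)$ has all its vertices in $V_{2r}(K)$, so this is a closed system on the incidence graph and the dynamic program handles it directly. Finally, the running-time claim is stated under the assumption that every component has at most $t$ active blocks, so it holds deterministically under that assumption, independently of the probabilistic analysis.
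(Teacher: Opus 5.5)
Your proposal is correct and follows the paper's proof: correctness is exactly the combination of Lemma~\ref{lem:sparse-good-event} with Lemma~\ref{lem:exactness-on-good-event}, and the running time comes from property~(4) at $s=2r$ with $|K|\le t$ together with a bounded-treewidth dynamic program, summed over at most $O(n/\tau^2)$ regions. Your extra details fill in steps the paper leaves implicit: computing the components of $H^{(r)}$, obtaining an approximate tree decomposition in $2^{O(w)}$ time, and observing that the local constraint system is closed on $V_{2r}(K)$. The only slip is in Steps~1 and~4, which quote $O(n\log^2 n)$ and an $O(n)$ total size via disjointness ``on the good event'', although the running-time claim may assume only $|K|\le t$; under that assumption alone, $\sum_K |F_{2r}(K)| = O(|\mathcal A|\, r^2\tau^2) = O(nr^2)$, which is still absorbed into $n\,2^{O(\tau r\sqrt t)}$.
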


\begin{proof}
By Lemma~\ref{lem:sparse-good-event}, the sparse good event
\(\mathcal G(r,t,R)\) holds with high probability.  On this event,
Lemma~\ref{lem:exactness-on-good-event} implies that the decoder outputs a
globally minimum-size correction.

It remains to justify the running time.  On the event that every component
\(K\in\mathcal K_r\) has at most \(t\) active blocks, the local treewidth
property gives that the face-vertex incidence graph of each inflated region
\(F_{2r}(K)\) has treewidth
\[
O(\tau r\sqrt t).
\]
A standard bounded-treewidth dynamic program, analogous to the frontier dynamic
program from Section~\ref{sec:cont_error},  solves the exact local syndrome
realization problem in time
\[
2^{O(\tau r\sqrt t)}\operatorname{poly}(tr^2\tau^2).
\]
The number of inflated regions is at most the number of blocks, which is
\(O(n/\tau^2)\).  Thus the total running time is
\[
n\,2^{O(\tau r\sqrt t)},
\]
absorbing polynomial factors in \(r,t,\tau\) into the displayed bound.
\end{proof}

\begin{corollary}[Exact decoding for \(p=o(1/\log^2 n)\)]
\label{cor:inflated_sparse_log_squared}
Let \(\{\mathcal L_n\}\) be a logarithmically two-dimensional block family in
the sense of
Definition~\ref{def:logarithmically-two-dimensional-block-family}.  Assume iid
face noise with probability
\[
p=o\left(\frac{1}{\log^2 n}\right).
\]
Choose a constant block size
\[
\tau=\Theta(1)
\]
and an inflation radius
\[
r=\Theta(\log n)
\]
with \(2r\leq L_n\).  Then the inflated active-component decoder produces a
globally minimum-weight correction with high probability.  The total running
time is
\[
n\,2^{O((\log n)^{3/2})}.
\]
\end{corollary}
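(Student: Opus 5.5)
The plan is to derive the corollary directly from Theorem~\ref{thm:parameterized-sparse-exact-decoding}. It suffices to exhibit parameters \(\tau=\Theta(1)\), \(r=\Theta(\log n)\), and suitable \(t,R\) for which the four hypotheses hold. The running-time bound \(n2^{O(\tau r\sqrt t)}\) then specializes to \(n2^{O((\log n)^{3/2})}\) once we take \(t=\Theta(\log n)\).

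\textbf{Parameter choices.}
\begin{itemize}
\item Fix \(\tau\) to be a constant \(\tau_0\).
\item Take \(r=\lfloor c_1\log n\rfloor\) with \(c_1\) small enough that \(2r\le L_n\). This is possible because \(L_n=\Omega(\log n)\).
\item Take \(t=\lceil c_2\log n\rceil\).
\item Take \(R=\lfloor \lambda_0\tau_0 r/3\rfloor\), so \(R=\Theta(\log n)\) and \(2R<\lambda_0\tau r\) holds automatically.
\end{itemize}

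\textbf{The two conditions involving \(\eta\).} Write \(p=\delta_n/\log^2 n\) with \(\delta_n\to0\). Then
\[
\eta=C_*r^2\tau_0^2p=O(\delta_n)\to 0 .
\]
For the condition on \(t\), \(n\eta^{c_*t}=\exp\bigl(\log n - c_*t\log(1/\eta)\bigr)\). Since \(\log(1/\eta)\to\infty\) and \(t\ge c_2\log n\), the exponent tends to \(-\infty\) for any fixed \(c_2>0\). The same argument applies to \(R=\Theta(\log n)\). In both cases the key point is that \(\log(1/\eta)\) eventually dominates \(1/(c_*c_2)\).

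\textbf{The condition on \(r\).} Here \(C_{\rm long}\tau_0^2p\to0\) as well, so
\[
n(C_{\rm long}\tau_0^2p)^{c_*r}=\exp\bigl(\log n - c_*c_1\log n\cdot\log(1/(C_{\rm long}\tau_0^2p))\bigr)\to0 .
\]
This holds because \(\log(1/p)\ge 2\log\log n\to\infty\).

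\textbf{Running time.} The theorem's time bound requires every component to have at most \(t\) active blocks. This is part of the good event, which holds with high probability. Substituting gives
\[
\tau r\sqrt t=O(\log n\cdot\sqrt{\log n})=O((\log n)^{3/2}).
\]
If a worst-case running time is wanted rather than one that holds with high probability, the decoder can abort when some component exceeds \(t\) blocks; this changes the output only on a vanishing-probability event.

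\textbf{Main obstacle.} The subtle step is confirming that the \(o(1)\) conditions genuinely need \(\eta\to0\) and not merely \(\eta<1\). With \(t,R,r=\Theta(\log n)\), each probability has the form \(n\cdot\eta^{\Theta(\log n)}\), so we need \(\log(1/\eta)\) to exceed a fixed constant. This is exactly why \(p=o(1/\log^2 n)\) is used: it makes \(r^2p\to0\). If \(p\) were only \(O(1/\log^2 n)\), one would instead have to choose \(c_2\) and \(R\)'s constant large relative to \(1/\log(1/\eta)\). For \(R\) this conflicts with \(2R<\lambda_0\tau r\) unless \(\tau\) or \(c_1\) is adjusted. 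I would therefore check this constant bookkeeping carefully, in particular that \(R\)'s multiple of \(\log n\) is compatible with both constraints, which holds since \(\eta\to0\).
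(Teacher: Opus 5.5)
Your proposal is correct and follows the paper's proof. As the paper does, you instantiate Theorem~\ref{thm:parameterized-sparse-exact-decoding} with $\tau=\Theta(1)$ and $r,t,R=\Theta(\log n)$, choosing $R$'s constant so that $2R<\lambda_0\tau r$. You then use $p=o(1/\log^2 n)$ to get $\eta\to 0$ and $C_{\rm long}\tau^2p\to 0$, so each $n\cdot(\text{small})^{\Theta(\log n)}$ term vanishes, and substitute into $n\,2^{O(\tau r\sqrt t)}$. Your explicit constant bookkeeping and your remark that the time bound holds only on the good event (fixable by aborting) make precise two points the paper states more briefly.
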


\begin{proof}
Let
\[
t=\Theta(\log n)
\qquad\text{and}\qquad
R=\Theta(\log n),
\]
choosing the constants so that
\[
2R<\lambda_0\tau r.
\]
Since \(\tau=\Theta(1)\), \(r=\Theta(\log n)\), and
\[
p=o\left(\frac{1}{\log^2 n}\right),
\]
we have
\[
\eta=C_*r^2\tau^2p=o(1).
\]
Therefore, with \(t=\Theta(\log n)\) and \(R=\Theta(\log n)\),
\[
n\eta^{c_*t}=o(1)
\qquad\text{and}\qquad
n\eta^{c_*R}=o(1).
\]
Also,
\[
n(C_{\rm long}\tau^2p)^{c_*r}=o(1),
\]
because \(r=\Theta(\log n)\) and \(p=o(1/\log^2 n)\).  Thus all hypotheses of
Theorem~\ref{thm:parameterized-sparse-exact-decoding} hold, so the decoder is
exact with high probability.

Finally, the running time from
Theorem~\ref{thm:parameterized-sparse-exact-decoding} is
\[
n\,2^{O(\tau r\sqrt t)}
=
n\,2^{O((\log n)^{3/2})},
\]
since \(\tau=\Theta(1)\), \(r=\Theta(\log n)\), and \(t=\Theta(\log n)\).
\end{proof}

\newpage
\bibliographystyle{plain}
\bibliography{ref}
\end{document}